\newtheorem{theorem}{Theorem}[section]
\newtheorem{lemma}[theorem]{Lemma}
\newtheorem{example}[theorem]{Example}
\newtheorem{proposition}[theorem]{Proposition}
\def\endproof{\qed\endtrivlist}
\let\csname endproof*\endcsname=\endproof
\def\qedsymbol{\ifmmode\bgroup\else$\bgroup\aftergroup$\fi
  \vcenter{\hrule\hbox{\vrule height.6em\kern.6em\vrule}\hrule}\egroup}
\def\qed{\ifmmode\else\unskip\nobreak\fi\quad\qedsymbol}
\renewcommand{\cal}{\mathscr}
\renewcommand{\iff}{\Leftrightarrow}
\renewcommand{\implies}{\Rightarrow}
\renewcommand{\le}{\leqslant}
\renewcommand{\ge}{\geqslant}
\begin{document}

\journal{a journal}

\title{\Large\bf Construction of fuzzy automata from fuzzy regular expressions\tnoteref{t1}}
\tnotetext[t1]{Research supported by Ministry  of Education and Science, Republic
of Serbia, Grant No. 174013}
\author{Aleksandar Stamenkovi\' c}
\ead{aca@pmf.ni.ac.rs}

\author{Miroslav \'Ciri\'c\corref{cor}}
\ead{miroslav.ciric@pmf.edu.rs}

\cortext[cor]{Corresponding author. Tel.: +38118224492; fax: +38118533014.}

\address{University of Ni\v s, Faculty of Sciences and Mathematics, Vi\v segradska 33, 18000 Ni\v s, Serbia}

\begin{abstract}\small
Li and Pedrycz [Y. M. Li, W. Pedrycz, Fuzzy finite automata and fuzzy regular
expressions with membership values in lattice ordered monoids, Fuzzy Sets and Systems 156 (2005) 68--92] have proved fundamental results that provide different equivalent ways to represent fuzzy languages with  membership values in a lattice-ordered monoid, and~gener\-alize the well-known results of the classical theory of formal languages.~In particular, they~have~shown that a fuzzy language over an integral lattice-ordered monoid can be represented by a fuzzy regular~expres\-sion if and only if it can be recognized by a fuzzy finite automaton.~However, they did not give any effective method for constructing~an equivalent fuzzy finite automaton from a given fuzzy regular~expres\-sion.~In this paper we provide such an effective method.

Trans\-form\-ing scalars appearing in a fuzzy regular expression $\alpha $ into letters of the new extended alphabet, we convert the fuzzy regular expression $\alpha $ to an ordinary regular expression $\alpha_R$.~Then, starting from an arbitrary nondeterministic finite  automaton $\cal A$ that recognizes the language $\|\alpha_R\|$ repre\-sented by the regular expression $\alpha_R$, we construct fuzzy finite automata ${\cal A}_\alpha $ and ${\cal A}_\alpha^\mathrm{r}$ with the same or even less number~of~states than the automaton $\cal A$, which recognize the fuzzy language $\|\alpha\|$ represented by the fuzzy regular expression~$\alpha $.~The starting nondeterministic finite automaton $\cal A$ can be obtained from $\alpha_R$ using any of the well-known constructions for converting regular expressions to nondeterministic finite automata, such as Glushkov-McNaughton-Yamada's position automaton, Brzozowski's derivative automaton, Antimirov's partial derivative automaton, or Ilie-Yu's follow automaton.

\end{abstract}

\begin{keyword}\small
Fuzzy automata; fuzzy regular expressions, nondeterministic automata; regular expressions, position automata; state reduction; right invariant equivalences; lattice-ordered monoids;
\end{keyword}

\maketitle

\section{Introduction}\label{sec:in}

Study of fuzzy automata and languages was initiated in 1960s by Santos \cite{Santos.68,Santos.72,Santos.76}, Wee \cite{Wee.67}, Wee and~Fu \cite{Wee.Fu.69}, and Lee and Zadeh \cite{LZ.69}. From late 1960s until early 2000s mainly fuzzy automata and languages~with membership values in the G\"odel structure have been considered (see for example \cite{DP.80,GSG.77,MM.02}).~The idea of studying~fuzzy automata with membership values in some structured abstract set comes back to W. Wechler \cite{Wechler.78}, and in recent~years~researcher's  attention has been aimed mostly to fuzzy automata with membership values in complete residuated lattices, lattice-ordered moinoids, and other kinds of lattices.~Fuzzy automata taking membership values in a complete residuated
lattice were first studied in \cite{Qiu.01,Qiu.02}, where some basic concepts
have been discussed, and later,~extensive
research of these fuzzy automata  has been carried out~in \cite{Qiu.04,Qiu.06,WQ.10,XQ.09a,XQ.09b,XQL.09,XQLF.07}.~From a different point
of view, fuzzy automata~with membership values in a complete residuated lattice were
studied in \cite{CSIP.07,CSIP.10,IC.10,ICB.08,ICB.09,ICBP.10,SCI.11}.~Fuzzy automata with membership values in a lattice-ordered monoid~have been inves\-tigated in \cite{LL.07,LL.06,LP.05,3Li.06}, fuzzy automata over other types of lattices were the subject of
\cite{DV.10,Li.11,LP.07,KSY.07,KL.07,P.04a,P.04b,PK.04,PZ.08}, and automata which generalize fuzzy automata over any type of lattices, as well as weighted automata over semirings, have been studied recently in \cite{CDIV.10,DSV.10,JIC.11}.~It is worth noting that fuzzy automata and languages are widely used in lexical analysis, description of natural and programming languages, learning systems, control systems, neural networks, clinical monitoring, pattern recognition, databases, discrete event systems, and many other areas.

Li and Pedrycz \cite{LP.05} have proved fundamental results that provide different equivalent ways to represent fuzzy languages with  membership values in a lattice-ordered monoid, e.g., by fuzzy finite automata, crisp-deterministic fuzzy finite automata, fuzzy regular expressions, and fuzzy regular grammars.~These~results generalize the well-known results of the classical theory of formal languages.~In particular, they~have~shown that a fuzzy language over an integral lattice-ordered monoid can be represented by a fuzzy regular~expres\-sion if and only if it can be recognized by a fuzzy finite automaton.~However, Li and Pedrycz did not give any effective method for constructing an equivalent fuzzy finite automaton from a given fuzzy regular~expres\-sion.~The purpose of the present paper is to provide such an effective method.

Our basic idea is to convert a fuzzy regular expression $\alpha $ into an ordinary regular expression $\alpha_R$,~trans\-form\-ing scalars appearing in the fuzzy regular expression $\alpha $ into letters of the new extended alphabet.~Then, starting from an arbitrary nondeterministic finite  automaton $\cal A$ that recognizes the language $\|\alpha_R\|$ repre\-sented by the regular expression $\alpha_R$, we construct a fuzzy finite automaton ${\cal A}_\alpha
$ with the same number~of~states as the automaton $\cal A$, which recognizes the fuzzy language $\|\alpha\|$ represented by the fuzzy regular expression~$\alpha $. Moreover, we construct a reduced version ${\cal A}_\alpha^{\mathrm{r}}$ of the fuzzy automaton
${\cal A}_\alpha $, a fuzzy finite automaton which also recognizes the fuzzy
language $\|\alpha\|$ and can have even smaller number of states than~${\cal A}_\alpha $.~The~method~is\break generic, which means that it can be used in combination with any method for constructing a nondetermi\-nis\-tic finite automaton from an ordinary regular expression.~In the past, many different~tech\-niques for~constructing nondeterministic finite automata from regular expressions have been proposed.~Besides {\it Thompson's con\-struc\-tion\/} \cite{T.68}, which build nondeterministic finite automata with $\varepsilon$-transitions, other well-known constructions build nondeterministic finite automata without $\varepsilon$-transitions.~The~best~known~and~most used such constructions are the {\it position~automaton\/}, discovered independently by Glushkov \cite{G.61} and McNaughton and
Yamada \cite{MY.60}, Brzozowski's {\it derivative automaton\/} \cite{B.64}, Antimirov's {\it partial derivative automaton\/}~\cite{A.96}, and Ilie and Yu's {\it follow automaton\/} \cite{IY.02a,IY.02,IY.03a,IY.03}.~Each of these constructions can serve as a basis for the construction of~our fuzzy finite automata.~More information on the algorithms for building small nondeterministic finite automata from regular expressions can be found in \cite{IY.03a}.

It should be noted that
the same idea of treating scalars appearing in a fuzzy regular expression~as~the~letters of a new extended alphabet, and then treating a fuzzy regular expression
as an ordinary regular~expres\-sion
over a larger alphabet, has been recently used by Kuske \cite{K.08} in the
context of weighted regular~expres\-sions and weighted finite automata over semirings.
However, there are some significant differences between his and our approach.~First,
Kuske considered only weighted regular expressions that define proper power
series, i.e., power series with zero as the coefficient of the empty word.~In terms of the theory of fuzzy lan\-guages, these are fuzzy languages which (absolutely) do not contain the empty word.~There is one~even~more important difference.~In the mentioned paper \cite{K.08}, Kuske gave
a new proof of the famous Sch\"utzenberger's theorem \cite{S.61,E.74} which asserts that the behaviors of weighted finite automata over an arbitrary semiring~are
precisely the rational formal power series, i.e., formal power series defined
by weighted regular expressions. In his proof, Kuske first converts a weighted
regular expression $E$ to a regular expression $E'$, then he starts from
an arbitrary deterministic finite automaton that recognizes the language
defined by $E'$, and~from this automaton he constructs a weighted
finite
automaton whose behavior is the formal power series
defined by $E$.~However, the number of states of deterministic finite automata obtained from regular~expressions can be exponentially larger than the lengths of the corresponding regular expressions.~For~this~reason, regular\break expressions are more often converted to nondeterministic finite automata, and the above
mentioned con\-struc\-tions
outputs nondeterministic finite automata
whose number of states is equal to the length of the regular expression plus one, or even less than that number.~In addition, our constructions output
fuzzy~finite auto\-mata with the same or even smaller number of states than
the original nondeterministic finite automaton.

As we have said, the size of
an automaton obtained from a regular expression plays a very important~role,
and for that reason regular expressions are mostly converted to nondeterministic finite automata.~On the other hand, for practical applications deterministic finite automata are usually needed, but determinization
of a nondeterministic finite automaton can cause an exponential blow up in
the number of states.~That~is~why the number of states of a nondeterministic
finite automaton has to be reduced prior to determinization.~As the minimization of nondeterministic finite automata is computationally hard, we must be satisfied with the methods for reducing the number of states that do not necessarily give a minimal automaton, but~rather~provide a reasonably small automaton that can be effectively computed.~Such reduction methods have been recently investigated
in \cite{CSY.05,CC.04,IY.02,IY.03,IY.03a,INY.04,ISY.05}, in the context of
nondeterministic finite automata, and in \cite{CSIP.07,CSIP.10,SCI.11},~in
the context of fuzzy finite automata (see also \cite{CIBJ.11,CIDB.11,CIJD.11}).~Key role in the state reduction of nondeterministic\break finite automata play right and left invariant equivalences, which have been generalized in the fuzzy framework as right and left invariant fuzzy equivalences
(cf.~\cite{CSIP.07,CSIP.10,SCI.11}).~It is worth noting that right and left invariant (fuzzy) equivalences
are also known
as forward and backward bisimulation (fuzzy) equivalences (cf.~\cite{CIBJ.11,CIDB.11,CIJD.11}).~In
particular, it  has been proved in \cite{CZ.01a,CZ.01b,CZ.02,IY.02a,IY.02,IY.03a,IY.03} that both the partial derivative automaton and the follow
automaton are factor automata of the position
automaton with respect to certain right~invar\-iant equivalences.~State reduction of fuzzy finite automata by means of right~invariant~fuzzy~and~crisp~equi\-valences will be also considered in this paper.~Let us also note that the above mentioned
determinization problem has been recently investigated in the fuzzy framework
in \cite{B.02,CDIV.10,ICB.08,JIC.11,LP.05}.

Our main results are the following.~We start from a given fuzzy regular expression $\alpha $ over
an alphabet $X$ and a lattice-orde\-red monoid ${\cal L}=(L,\land ,\lor, \otimes,0,1,e)$, and we define an ordinary regular expression $\alpha_R$ over a new alphabet $X\cup Y$, where $Y$ consists of the letters associated with different scalars appearing in $\alpha $.~The mapping $\varphi_\alpha $ of $X\cup Y$ to $L$ , which maps all letters from $X$ to $e$, and letters from $Y$ to related scalars appearing in $\alpha $, can be extended in a natural way to a homomorphism $\varphi_\alpha^*$ of the free monoid $(X\cup Y)^*$ to the monoid $(L,\otimes ,e)$.~In the case when $\cal L$ is an integral lattice-ordered monoid, using this homomorphism we establish a relationship between the fuzzy language $\|\alpha\|$ represented by $\alpha $ and the language $\|\alpha_R\|$ represented by $\alpha_R$ (cf.~Theorem \ref{lema3}), and starting from any nondeterministic finite automaton ${\cal A}$ that recognizes the language $\|\alpha_R\|$ we define the fuzzy automaton ${\cal A}_\alpha $ associated with $\cal A$ and $\alpha $, and we prove that ${\cal A}_\alpha $ recognizes the fuzzy language $\|\alpha \|$ represented by the fuzzy regular expression $\alpha $ (cf.~Theorem \ref{th:1}).

However,~the~aforementioned definition of the fuzzy automaton ${\cal A}_\alpha $ is not  sufficiently constructive, because the computing of the fuzzy transition relation and the fuzzy set of terminal states of ${\cal A}_\alpha $ requires the computing of minimal words in certain infinite languages with respect to the embedding order, which might be a problem.~We~solve~this problem introducing a reflexive and transitive fuzzy relation
$R_{\cal A}$~on~the set of states of the starting~nondeter\-ministic finite automaton $\cal A$, which can be effectively computed~as~the $n$-th power of an easily computable fuzzy relation, where $n$ is the number~of~states~of~$\cal A$.~We~express~the fuzzy relation $R_{\cal A}$ in terms of the homomor\-phism $\varphi_\alpha^*$ and the transition relation of $\cal A$ (cf.~Theorem~\ref{lemanova1}), and then we express the fuzzy transition relation and the fuzzy set of terminal states of ${\cal A}_\alpha $ in terms of the fuzzy relation $R_{\cal A}$,  the transition relation of $\cal A$, and the set of terminal states of $\cal A$ (cf.~Theorem \ref{th:2}).~This result provides an effective construction of the fuzzy finite automaton ${\cal A}_\alpha $ associated with $\cal A$ and the fuzzy regular expression~$\alpha $.

Using the fuzzy relation~$R_{\cal A}$ we also construct a version ${\cal A}_\alpha^\mathrm{r}$ of the fuzzy finite automaton ${\cal A}_\alpha $ which can have even smaller number of states than the fuzzy automaton ${\cal A}_\alpha $ and the automaton ${\cal A}$, and recognizes the same fuzzy language $\|\alpha\|$ (cf.~Theorem \ref{th:3}).~We show by an example that the number of states of ${\cal A}_\alpha^\mathrm{r}$ can be strictly smaller than the number of states of $\cal A$ and ${\cal A}_\alpha $.~We also discuss the state reduction of the fuzzy automaton ${\cal A}_\alpha $ by means of right invariant crisp equivalences, and we show that even if the starting automaton $\cal A$ is a minimal deterministic automaton, the number of states of the fuzzy automaton ${\cal A}_\alpha $ could be reduced.~Finally, we describe certain properties of fuzzy automata obtained from the position and the follow automaton.

The structure of the paper is as follows.~In Section \ref{sec:pre} we recall some basic definitions and results~concern\-ing fuzzy sets and relations over lattice ordered monoids, nondeterministic and fuzzy automata, and regular and fuzzy regular expressions.~In Section \ref{sec:fuzzy.autom.exp} we give the basic construction of a fuzzy finite automaton ${\cal A}_\alpha $ associated with a fuzzy regular expression $\alpha $ and a nondeterministic finite automaton $\cal A$ recognizing the language $\|\alpha_R\|$.~Section \ref{sec:effect.con} addresses the issue of the effective construction of the fuzzy automaton ${\cal A}_\alpha $, and in Section \ref{sec:reduced} we deal with the version of this construction that gives a fuzzy automaton with a reduced number of states
with respect to the original construction.~Finally, in Section \ref{sec:redstate} we discuss the problem of the reduction of the number of states of fuzzy finite automata constructed from fuzzy regular expressions.

\section{Preliminaries}\label{sec:pre}

In this section we recall some basic definitions and results
concerning fuzzy sets and relations over~lattice ordered monoids, nondeterministic and fuzzy automata, and regular and fuzzy regular expressions.

\subsection{Lattice-ordered monoids}\label{sec:pre:lomon}

A {\it lattice-ordered monoid\/} or an {\it $\ell-$monoid}
\cite{L.04,LL.06,LP.05,SL.06} is an algebra ${\cal L}=(L, \land,
\lor, \otimes, 0, 1, e)$ such that

\begin{itemize}
\parskip=0pt
\item[\rm{(L1)}] $(L, \land, \lor, 0, 1)$ is a lattice with the
least element $0$ and the greatest element $1$,
\item[\rm{(L2)}] $(L, \otimes, e)$ is a monoid with the unit $e$,
\item[\rm{(L3)}] $x\otimes 0=0\otimes x=0,$ for every $x\in L,$
\item[\rm{(L4)}] $x\otimes(y\vee z)=x\otimes y\vee x\otimes z$, \ $(x\vee
y)\otimes z=x\otimes z\vee y\otimes z,$ \ for all $x,y,z\in L.$
\end{itemize}
The operation $\otimes$ is called the {\it multiplication\/}.~In addition, if $(L, \land, \lor, 0, 1)$ is a complete lattice and
satisfies the following infinite distributive laws
\begin{equation}\label{eq:inf.distr}
x\otimes\bigl(\bigvee_{i\in I}x_i\bigr)=\bigvee_{i\in
I}(x\otimes x_i),\qquad \bigl(\bigvee_{i\in I}x_i\bigr)\otimes
x=\bigvee_{i\in I}(x_i\otimes x),
\end{equation}
then $\cal L$ is called a {\it quantale\/}.~In the general case, in an $\ell $-monoid ${\cal L}=(L, \land,
\lor, \otimes, 0, 1, e)$ the greatest element~$1$ of the lattice $(L, \land, \lor, 0, 1)$ and the unit element $e$ of the monoid $(L, \otimes, e)$ are different. If $1$ and $e$ coincide, then $\cal L$ is called an {\it integral $\ell-$monoid\/}.

It can be easily verified that with respect to $\le$, the multiplication $\otimes$ in an $\ell $-monoid is isotone in both arguments, i.e., for
all $x,y,z\in L$ we have
\begin{equation}\label{isotonost}
x\le y\ \text{implies}\ x\otimes z\le y\otimes z\ \text{and}\ z\otimes x\le z\otimes y.
\end{equation}

An integral quantale with commutative multiplication is known as a {\it complete residuated lattice\/} (cf.~\cite{BEL.02,BV.05}). The most studied and applied kinds of complete residuated lattices, with the support $[0,1]$, $x\land y=\min(x,y)$ and $x\lor y=\max(x,y)$, are the {\it Lukasiewicz structure\/}, with the multiplication defined by $x\otimes y=max(x+y-1,0)$, the {\it Goguen\/} or {\it product structure\/}, with $x\otimes
y=x\cdot y$, and the {\it G\" odel structure\/}, with $x\otimes y=min(x,y)$.~The fourth important type of complete residuated lattices is the two-element Boolean algebra of classical logic with the support $\{0,1\}$, called the {\it Boolean structure\/}.

In the further text, if not noted otherwise, $\cal L$ will be an $\ell-$monoid.~A {\it
fuzzy subset\/} of a set $A$ is defined~as~any mapping from $A$ into
$L$.~Ordinary crisp subsets~of~$A$ are considered as fuzzy subsets of $A$ taking membership values in the set
$\{0,e\}\subseteq L$.~Let $f$ and $g$ be two
fuzzy subsets of $A$.~The {\it equality\/} of $f$ and $g$ is defined as the
usual equality of functions, i.e., $f=g$ if and only if $f(x)=g(x)$, for every
$x\in A$. The {\it inclusion\/} $f\leqslant g$ is also defined pointwise:~$f\leqslant g$ if
and only if $f(x)\leqslant g(x)$, for every $x\in A$.~Endowed with this partial order the set
$L^A$ of all fuzzy subsets of $A$
forms the distributive lattice, in which the meet (intersection)
$f\land g$ and the join (union) $f\lor g$ of any fuzzy subsets
$f,g$ of $A$ are also fuzzy subsets of $A$ over $\cal L$ defined
by
\begin{equation}\label{eq:and}
(f\land g)(x) =f(x)\land g(x),\mbox{\ \ }(f\lor g)(x) =f(x)\lor
g(x).
\end{equation}
for each $x\in L$.~The {\it crisp~part\/}~of a fuzzy subset $f\in L^A$ is a crisp
subset $\widehat f=\{a\in A\,|\, f(a)=e\}$ of $A$.~We~will~also consider $\widehat f$ as a mapping
$\widehat f:A\to L$ defined by $\widehat f(a)=e$, if $f(a)=e$, and
$\widehat f(a)=0$, otherwise.

A {\it fuzzy relation\/} on $A$ is any fuzzy subset of $A\times
A.$ The equality, inclusion and ordering of fuzzy relations are
defined as for fuzzy sets.~For fuzzy relations $R$ and $S$ on a set $A$,  their
{\it composition\/} $R\circ S$ is a fuzzy relation on $A$ defined~by
\begin{equation}\label{eq:komp}
(R\circ S)(a,b)=\bigvee_{c\in A}R(a,c)\otimes S(c,b),
\end{equation} for all $a,b\in A,$ and for a fuzzy subset
$f$ of $A$ and a fuzzy relation $R$ on $A$, the {\it
compositions\/} $f\circ R$ and $R\circ f$ are fuzzy subsets of $A$
defined, for any $a\in A$, by
\begin{equation}\label{eq:komp1}
(f\circ R)(a)=\bigvee_{b\in A}f(b)\otimes R(b,a),\mbox{\ \
}(R\circ f)(a)=\bigvee_{b\in A}R(a,b)\otimes f(b).
\end{equation}
For fuzzy subsets $f$ and $g$ we write
\begin{equation}\label{eq:komp2}
f\circ g=\bigvee_{a\in A}f(a)\otimes g(a).
\end{equation} It is well
known that the composition of fuzzy relations is associative. Moreover
\begin{equation}\label{eq:asoc}
(f\circ R)\circ S=f\circ (R\circ S),\quad{\ \ }(R\circ S)\circ f=R\circ (S\circ f),\quad{\ \ }(f\circ R)\circ
g=f\circ (R\circ g),
\end{equation} for all fuzzy subsets $f$ and $g$ of $A,$ and
fuzzy relations $R$ and $S$ on $A.$ If $A$ is a finite set with $n$ elements,~then $R$ and $S$ can be treated
as $n\times n$ matrices over ${\cal L}$, and $R\circ S$ is their
matrix product, whereas $f\circ R$ can be treated as the product
of the $1\times n$ matrix $f$ and the $n\times n$ matrix $R,$ and $R\circ
f$ as the product of the $n\times n$~matrix $R$ and the $n\times 1$ matrix
$f^t$ (the transpose of $f$).

For a finite set $A$ and an fuzzy relation $R$ on $A$, a fuzzy
relation $R^n$ is defined inductively as follows:~$R^0$~is the crisp
equality on $A$, and $R^{n+1}=R^n\circ R,$ for $n\in \Bbb
N\cup\{0\}.$

A fuzzy relation $R$ on $A$ is said to be
\begin{itemize}\parskip=-3pt
\parskip=-2pt
\item[(R)] {\it reflexive\/}  if $R
(a,a)=e$, for every $a\in A$; \item[(S)] {\it symmetric\/}  if $R (a,b)=R (b,a)$, for all $a,b\in A$;
\item[(T)] {\it transitive\/}  if for
all $a,b,c\in A$ we have $R (a,b)\otimes R (b,c)\le R (a,c)$.
\end{itemize}
It is easy to check that a reflexive fuzzy relation $R$ is transitive if and only if $R^2=R$, and then $R^n=R$, for every $n\in \mathbb N$.~A reflexive, symmetric and transitive fuzzy relation is called a {\it fuzzy equivalence\/}.~For a fuzzy equivalence $E  $ on $A$ and $a\in A$ we define a
fuzzy subset $E _a$ of $A$ by $E_a(x)=E(a,x)$,~for~every $x\in
A$.~We call $E _a$ the {\it equivalence class\/} of $E$
deter\-mined by $a$.~The set $A/E =\{E _a\,|\, a\in A\}$ is
called~the {\it factor set\/} of $A$ with respect to $E$ (cf.
\cite{BEL.02,BV.05,CIB.07}).~We use the same notation for crisp
equivalences, i.e., for an equivalence $\pi $ on $A$, the related
factor set is denoted by $A/\pi $, the equivalence class of an
element $a\in A$ is denoted by $\pi_a$.~A fuzzy equivalence $E$ on
a set $A$ is called a {\it fuzzy equality\/} if for all $x,y\in
A$, $E(x,y)=e$ implies $x=y$. In other words, $E$ is a fuzzy
equality if and only if its crisp part $\widehat E$ is a crisp
equality.

\subsection{Fuzzy regular expressions}\label{sec:langre}

Let $X$ be a non-empty set, which is called an {\it alphabet\/} and whose elements are called {\it letters\/}, and let $X^*$ be the free monoid over $X$, i.e., the set of all finite sequences of letters from $X$, including the empty sequence, equipped with the concatenation operation.~Elements of $X^*$ are called {\it words\/}, and the empty sequence is denoted by $\varepsilon $ and called the {\it empty word\/}.

A {\it  fuzzy language\/} in $X^*$ is defined as any fuzzy subset of $X^*$.~A {\it language\/} in $X^*$ is a fuzzy language in
$X^*$ taking membership values in the set $\{0,e\}$. For a
 fuzzy language $f$ and a scalar $\lambda\in L$, the {\it scalar
multiplication\/} $\lambda\otimes f$ is a
fuzzy language in $X^*$ defined by
\[
(\lambda\otimes f)(u)=\lambda\otimes f(u),
\]for any $u\in X^*.$
The {\it union} ({\it join}) $f\lor g$ of  fuzzy languages $f$ and
$g$ is defined as the union of  fuzzy subsets $f$ and $g$. The
{\it concatenation} ({\it product\/}) $fg$ of  fuzzy
languages $f$ and $g$ is defined by
\[
(f g)(u)=\bigvee_{u=vw}f(v)\otimes g(w).
\]
The concatenation of fuzzy languages is an associative operation, and for
$n\in \mathbb{N}$, the {\it $n$-th power\/} of a fuzzy language $f$ is defined
inductively by $f^0=f_\varepsilon $, where $f_\varepsilon$ is a characteristic function of the empty word
$\varepsilon$, i.e.,
\begin{equation}
\label{caracteristic}
f_\varepsilon(u)=\begin{cases}
\ e & \text{if}\ u=\varepsilon\\
\ 0 & \text{otherwise}
\end{cases},
\end{equation}
and $f^{n+1}=f^nf$, for each $n\in \mathbb{N}\cup \{0\}$.~The {\it Kleene closure\/} of a fuzzy language $f$, denoted by $f^*,$ is defined by
\[
f= \bigvee_{n\in \mathbb{N}\cup\{0\}}f^n.
\]

Recall the following result proved in \cite{LP.05}.

\begin{proposition}\label{prop3} If
$\cal L$ is an integral $\ell-$monoid, then for any  fuzzy
language $f$, the Kleene closure is well defined.
\end{proposition}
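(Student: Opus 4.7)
The plan is to show that, for each word $u\in X^*$, the infinite join $\bigvee_{n\ge 0} f^n(u)$ collapses to a finite one, hence exists in the lattice $L$ without any completeness hypothesis on $\cal L$. The sole leverage point will be the integrality of $\cal L$, which forces $e=1$ and therefore $f(\varepsilon)\le e$ for every fuzzy language $f$.

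First I would establish by induction on $n$ the explicit formula
\[
f^n(u)=\bigvee_{u=v_1\cdots v_n}f(v_1)\otimes\cdots\otimes f(v_n),
\]
where the join ranges over all ordered factorizations of $u$ into $n$ (possibly empty) factors. The base cases $n=0,1$ are immediate from the definitions of $f_\varepsilon$ and of concatenation, and the inductive step rewrites $f^{n+1}(u)=(f^nf)(u)=\bigvee_{u=vw}f^n(v)\otimes f(w)$ and pulls the inner join over decompositions of $v$ through $\otimes f(w)$ using the finite distributivity (L4). Since, for fixed $u$ and $n$, the set of such factorizations is finite, each $f^n(u)$ is already a well-defined finite join in $L$.

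The heart of the argument is the reduction step. For $n>k=|u|$, the pigeonhole principle forces any factorization $u=v_1\cdots v_n$ to contain at least one empty factor $v_i=\varepsilon$. Integrality gives $f(\varepsilon)\le 1=e$, so combining isotonicity (\ref{isotonost}) with the fact that $e$ is the $\otimes$-identity yields
\[
f(v_1)\otimes\cdots\otimes f(v_{i-1})\otimes f(\varepsilon)\otimes f(v_{i+1})\otimes\cdots\otimes f(v_n)\le f(v_1)\otimes\cdots\otimes f(v_{i-1})\otimes f(v_{i+1})\otimes\cdots\otimes f(v_n),
\]
and the right-hand side is a term in the supremum defining $f^{n-1}(u)$ since $v_1\cdots v_{i-1}v_{i+1}\cdots v_n=u$. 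Taking the join over all $n$-factorizations gives $f^n(u)\le f^{n-1}(u)$ for every $n>k$, and iterating produces $f^n(u)\le f^k(u)$ for all $n\ge k$. Hence $\bigvee_{n\ge 0}f^n(u)=\bigvee_{n=0}^{k}f^n(u)$, a finite lattice join, so $f^*(u)$ is defined pointwise and $f^*$ is a well-defined fuzzy language. The only delicate point is the bookkeeping in the pigeonhole/removal step, and this is precisely where integrality is indispensable: without $e=1$ the value $f(\varepsilon)$ could exceed $e$, the displayed inequality would generally reverse, and the collapse of the infinite join would no longer be available.
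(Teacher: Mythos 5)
Your proof is correct, and it is essentially the standard argument: the paper itself does not reproduce a proof but cites Li--Pedrycz \cite{LP.05}, whose argument (relied upon later in the proof of Theorem \ref{lema3} via the fact that for each $u$ there is an $n$ with $\|\alpha\|(u)\le\|\beta_n\|(u)$) rests on the same pointwise collapse you establish, namely that for $n>|u|$ every $n$-factorization of $u$ has an empty factor, so integrality ($f(\varepsilon)\le e$) gives $f^n(u)\le f^{|u|}(u)$ and the infinite join reduces to a finite one.
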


The family $\cal {LR}$ of {\it fuzzy regular expressions\/}  over a finite alphabet $X$
is defined inductively in the following way (cf.~\cite{LL.06,LP.05}):
\begin{itemize}
\parskip=-2pt
\item[{\rm (i)}] $\emptyset\in {\cal{LR}}$;
\item[{\rm (ii)}] $\varepsilon\in {\cal{LR}}$;
\item[{\rm (iii)}] $x\in {\cal{LR}}$, for all $x\in X$;
\item[{\rm (iv)}] $(\lambda\alpha)\in{\cal{LR}}$, for all $\lambda\in L$ and
$\alpha\in{\cal{LR}}$ ({\it scalar multiplication\/});
\item[{\rm (v)}] $(\alpha_1+\alpha_2)\in{\cal{LR}},$ for all $\alpha_1, \alpha_2
\in{\cal{LR}}$ ({\it addition\/});
\item[{\rm (vi)}] $(\alpha_1\alpha_2)\in{\cal{LR}},$ for all $\alpha_1, \alpha_2
\in{\cal{LR}}$ ({\it concatenation\/});
\item[{\rm (vii)}] $(\alpha^*)\in{\cal{LR}}$, for all $\alpha \in{\cal{LR}}$  ({\it star operation\/});
\item[{\rm (viii)}] There are no other fuzzy regular expressions than those given in steps (i)--(viii).
\end{itemize}
In order to avoid parentheses it is assumed that the star operation has the
highest priority, then concatenation and then addition.~For any fuzzy regular expression
$\alpha\in \cal{LR}$, the {\it fuzzy language $\|\alpha\|$ determined by\/} $\alpha $ is defined
inductively as follows (cf.~\cite{LL.06,LP.05}):
\begin{itemize}
\parskip=-2pt
\item[{\rm (i)}] $\|\emptyset\|(u)=0$, for every $u\in X^*$,
\item[{\rm (ii)}] For $\alpha\in X\cup\{\varepsilon\}$, $\|\alpha\|=f_{\alpha}$, where $f_{\alpha}$ is the characteristic
function of $\alpha$ defined by
\[
f_{\alpha}(u)=
\begin{cases}
\ e & \text{if}\ u=\alpha\\
\ 0 & \text{otherwise}
\end{cases};
\]
\item[{\rm (iii)}] $\|\lambda\alpha\|=\lambda\otimes\|\alpha\|$
for all $\lambda\in L$ and $\alpha\in{\cal{LR}};$ \item[{\rm
(iv)}] $\|(\alpha_1+\alpha_2)\|=\|\alpha_1\|\lor\|\alpha_2\|,$ for
all $\alpha_1, \alpha_2 \in{\cal{LR}};$ \item[{\rm (v)}]
$\|(\alpha_1\alpha_2)\|=\|\alpha_1\|\,\|\alpha_2\|,$ for all
$\alpha_1, \alpha_2 \in{\cal{LR}};$ \item[{\rm (v)}]
$\|\alpha^*\|=\|\alpha\|^*,$ for all $\alpha \in{\cal{LR}}.$
\end{itemize}
For a fuzzy regular expression $\alpha$ over $X$, the {\it
length\/} of $\alpha$, denoted by $|\alpha|_X$, is the number of
occurrences of letters from $X$ in $\alpha.$

A fuzzy regular expression $\alpha$  which does not contain any occurrence of an element of~$L$ is called a {\it regular expression\/} over an alphabet $X$.~In other words, regular expresions are those fuzzy regular expressions that are obtained without using any scalar mutiplication.~Note that the fuzzy language $\|\alpha\|$ defined by a regular
expression~$\alpha$~takes~mem\-bership values in the set $\{0,e\}$,
and thus, it can be considered as an ordinary subset of $X^*.$

For the free monoid $X^*$ we set $X^+=X^*\setminus\{ \varepsilon\}$.~The {\it length\/} of a word $u\in X^*$, in notation $|u|$, is the number of appearances of letters from $X$ in $u$.~The embedding order relation $\le_{em}$ is defined on $X^*$ by
\begin{equation}\label{eq:embedding}
u\le_{em} v\ \iff\ u=u_1u_2\cdots u_n\mbox{\ \ and\ \
}v=v_0u_1v_1u_2\cdots v_{n-1}u_nv_n,
\end{equation} where $n\in \Bbb N$
and $u,v,u_1,u_2,\dots,u_n,v_0,v_1,\dots,v_n\in X^*.$

\begin{proposition}\label{prop1}{\rm (\cite{H.69,H.52})} For any alphabet
$X$, $\le_{em}$ is a partial order on $X^*.$ Any set of pairwise
incomparable words in the partially ordered set $(X^*,\le_{em})$
is finite.

Consequently, for any $U\subseteq X^*$, the set $M(U)$ of all minimal words from $U$ with respect to $\le_{em}$ is finite.
\end{proposition}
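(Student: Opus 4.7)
The plan is to verify in turn (a) that $\le_{em}$ is a partial order, (b) that every $\le_{em}$-antichain of words is finite, and (c) that the corollary on $M(U)$ follows at once from (b). The finiteness of the alphabet $X$ (implicit in the paper's setting) is essential for (b), but is not needed for (a).

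For (a), I would first check reflexivity by writing $u=u_1\cdots u_n$ with the $u_i$ the successive letters of $u$ and taking $v_0=v_1=\cdots=v_n=\varepsilon$ in the definition of $\le_{em}$. Transitivity is a direct substitution: given $u\le_{em}v$ and $v\le_{em}w$, each letter of $u$ is matched to an occurrence in $v$ and each letter of $v$ to an occurrence in $w$, so composing the two matchings produces an embedding witnessing $u\le_{em}w$. Antisymmetry follows from the obvious monotonicity $u\le_{em}v\Rightarrow |u|\le|v|$: if both $u\le_{em}v$ and $v\le_{em}u$, then $|u|=|v|$, and then the embedding forces each $v_j$ to be empty, so $u=v$.

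For (b), this is precisely the classical theorem of Higman \cite{H.52} (later rediscovered in the language-theoretic context by Haines \cite{H.69}), which asserts that $\le_{em}$ is a well-quasi-order on $X^*$ whenever $X$ is a well-quasi-ordered set, in particular whenever $X$ is finite. I would either cite \cite{H.69,H.52} as the statement already does, or sketch the standard Nash-Williams \emph{minimal bad sequence} proof: assuming toward a contradiction that an infinite $\le_{em}$-antichain exists, choose one $(u_n)_{n\in\mathbb{N}}$ so that inductively $|u_n|$ is minimal among all possible completions of $u_1,\dots,u_{n-1}$ to an infinite antichain; since $X$ is finite, infinitely many $u_n$ begin with the same letter $x\in X$; deleting this initial $x$ from each such $u_n$ produces a strictly shorter sequence which, by the minimality of the construction, cannot itself contain an infinite antichain, and the resulting subembedding among the $u_n$ contradicts their incomparability.

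For (c), the set $M(U)$ is automatically a $\le_{em}$-antichain: if $u,v\in M(U)$ with $u\le_{em}v$, then minimality of $v$ in $U$ forces $u=v$. Hence by (b) the set $M(U)$ is finite. The main obstacle is clearly (b), since Higman's lemma is genuinely nontrivial, while parts (a) and (c) are essentially bookkeeping; in a paper of this kind it seems reasonable to dispatch (b) by citation.
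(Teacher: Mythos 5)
Your proposal is correct; note that the paper itself offers no proof of this proposition at all, simply attributing it to Haines and Higman via the citation \cite{H.69,H.52}. Your treatment matches that: the partial-order verification and the observation that $M(U)$ is an antichain are routine, and the only substantive ingredient, Higman's lemma (which you correctly note requires the finiteness of $X$, implicit throughout the paper), is properly dispatched by the same citation, with your minimal-bad-sequence sketch being a standard and accurate account of its proof.
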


Throughout the paper, the set of all minimal words from $U\subseteq X^*$ with respect to the embedding order~$\le_{em}$ will be denoted by $M(U)$, as in the previous proposition.

\subsection{Fuzzy automata}\label{sec:pre:fuzzy.autom}

Let $\cal L$ be an $\ell-$monoid.~A {\it  fuzzy automaton\/} (over $\cal L$) is defined as a five-tuple ${\cal
A}=(A,X,\delta^A ,\sigma^A ,\tau^A)$, where $A$ and $X$ are non-empty sets,
called respectively the {\it set of states\/} and the {\it input
alphabet\/},  $\delta^A: A\times X\times A\to L$ is a fuzzy subset
of $A\times X\times A$, called the {\it  fuzzy transition
relation\/}, $\sigma^A \in L^A$ is the  fuzzy set of {\it
initial~states\/}, and $\tau^A\in L^A$ is the  fuzzy set of~{\it
terminal states\/}.~We will assume that the input alphabet $X$ is always
finite.~A~fuzzy~auto\-maton whose set of states is finite is called a {\it  fuzzy
finite automaton\/}.~Since all fuzzy automata considered in this paper will
be finite, we will speak simply {\it fuzzy automaton\/} instead of fuzzy finite automaton.~Cardinality of a fuzzy automaton ${\cal
A}$, in notation $|{\cal A}|$, is defined as the
cardinality $|A|$ of its set of states $A$.

The fuzzy transition relation $\delta^A$ can be extended up to a mapping $\delta^A_*:
A\times X^*\times A\to L$ in the following way: If $a, b\in A$ and
$\varepsilon\in X^*$ is the empty word, then
\begin{equation}
\label{eeextension} \delta^A_* (a, \varepsilon, b)=
\begin{cases}
\ e & \text{if}\ a=b\\
\ 0 & \text{otherwise}
\end{cases},
\end{equation}
and if $a, b\in A,$ $u\in X^*$ and $x\in X,$ then
\begin{equation}
\label{eextension} \delta^A_* (a, ux, b)=\bigvee_{c\in
A}\delta^A_*(a, u, c)\otimes\delta^A(c, x, b)
\end{equation}Without danger of confusion we shall write just
$\delta^A$ instead of $\delta^A_*.$

By (L4) and Theorem 3.1 in \cite{LP.05} we have that
\begin{equation}
\label{extension} \delta^A(a, uv, b)=\bigvee_{c\in A}\delta^A(a,
u, c)\otimes\delta^A(c, v, b),
\end{equation}for all $a, b\in A$ and $u, v\in X^*.$

For any $u\in X^*$ we define a fuzzy relation $\delta^A_u\in
L^{A\times A},$ called the {\it  fuzzy transition relation\/}
determined by $u$, by $\delta^A_u(a,b)=\delta^A(a,u,b),$ for all
$a,b\in A$. Then for all $u,v\in X^*,$ the equality
(\ref{extension}) can be written as
$\delta^A_{uv}=\delta^A_u\circ\delta^A_v.$

~A {\it fuzzy language recognized by a fuzzy
automaton\/} ${\cal A}=(A,X,\delta^{A} ,\sigma^{A} ,\tau^{A} )$,
denoted by $L({\cal A})$, is a fuzzy language in $X^*$ defined by
\begin{equation}
\label{language} L({\cal A})(u)=\bigvee_{a, b\in
A}\sigma^A(a)\otimes\delta^A(a, u, b)\otimes\tau^A(b),
\end{equation}or equivalently,
\begin{equation}
\label{rellanguage} L({\cal
A})(u)=\sigma^A\circ\delta^A_u\circ\tau^A=\sigma^{A} \circ
\delta_{x_1}^{A}\circ \delta_{x_2}^{A}\circ \dots \circ
\delta_{x_n}^{A}\circ \tau^{A},
\end{equation}for any $u=x_1x_2\dots x_n\in X^*$ with $x_1,x_2,\ldots ,x_n\in
X$.

In particular, if ${\cal A}=(A, X,
\delta^A, a_0, \tau^A)$ is a fuzzy automaton having a single crisp initial state $a_0$, then the
 fuzzy language $L({\cal A})$ recognized by $\cal A$ is given by
\begin{equation} \label{languageinit}
L({\cal A})(u)=\bigvee_{a\in A}\delta^A(a_0, u,
a)\otimes\tau^A(a).
\end{equation}or equivalently,
\begin{equation}\label{rellanguageinit}
L({\cal A})(u)=(\delta^A_u\circ\tau^A)(a_0)=(\delta_{x_1}^{A}\circ
\delta_{x_2}^{A}\circ \dots \circ \delta_{x_n}^{A}\circ
\tau^{A})(a_0),
\end{equation}for any $u=x_1x_2\dots x_n\in X^*$ with $x_1,x_2,\ldots ,x_n\in
X$.

In the further text, ordinary nondeterministic automata will be considered as  fuzzy automata.~Namely,  by a nondeterministic automaton we mean a fuzzy
automaton ${\cal A}=(A, X,
\delta^A, \sigma^A, \tau^A)$ such that
$\delta^A_x$ is a~fuzzy relation taking values in the set $\{0,e\}$,   for each $x\in X$,
and $\sigma^A$ and $\tau^A$ are fuzzy sets also taking values in~$\{0,e\}$. In this case, the fuzzy language recognized by $\cal A$ is a crisp language,
and it is exactly the~language~recognized by a nondeterministic automaton in the sense of the well-known definition from the~classical~theory~of~non\-deterministic automata.

Let ${\cal A}=(A, X,\delta^A,\sigma^A,\tau^A)$  be a fuzzy automaton and let $E$ be
a fuzzy equivalence~on~$A$.~Without~any~restriction on the
 fuzzy equivalence $E$, we define a fuzzy
transition relation $\delta^{A/E}:A/E\times X\times A/E\to L$~by
\begin{equation}\label{eq:dE1}
\delta^{A/E}(E_a,x,E_b) = \bigvee_{a',b'\in A} E(a,a')\otimes
\delta (a',x,b')\otimes E(b',b)= (E\circ \delta_x\circ E)(a,b) = E_a\circ
\delta_x\circ E_b,
\end{equation}
and fuzzy sets $\sigma^{A/E}\in L^{A/E}$ and $\tau^E\in L^{A/E}$ of initial and terminal states by
\begin{gather}
\sigma^{A/E}(E_a) = \bigvee_{a'\in A}\sigma^A (a')\otimes E(a',a)
= (\sigma^A\circ
E)(a) = \sigma^A\circ E_a , \label{eq:sE} \\
\tau^{A/E}(E_a) = \bigvee_{a'\in A}\tau^A (a')\otimes E(a',a) =
(\tau^A\circ E)(a) = \tau^A\circ E_a , \label{eq:tE}
\end{gather}
for any $a\in A$. Evidently, $\delta^{A/E}$, $\sigma^{A/E}$ and $\tau^{A/E}$
are
well-defined, and${\cal
A}/E=(A/E,X,\delta^{A/E},\sigma^{A/E},\tau^{A/E})$ is a fuzzy
automaton, called the {\it factor fuzzy automaton\/} of $\cal A$
with respect to $E$.

\subsection{Position automata}\label{sec:pre:pos}

In this section we recall the construction of the position
automaton from a regular expression \cite{G.61,{MY.60}}.

Let $\alpha$ be a regular expression over an alphabet $X$. Denote
by $\overline{\alpha}$ the expression obtained from $\alpha$ by~marking each letter in $\alpha$ with its position.~The same
notation will be used for removing indices, that is, for a regular
expression $\alpha$ we put
$\alpha=\overline{\overline{\alpha}}$.~We define the following sets:
\begin{itemize}
\parskip=-2pt
\item[\rm (i)] $pos_0(\alpha)=\{0,1,\dots ,|\alpha|_X\}$,
\item[\rm (ii)] $first(\alpha)=\{i\ |\ x_iu\in
\|\overline{\alpha}\|\}$, \item[\rm (iii)] $last(\alpha)=\{i\ |\
ux_i\in \|\overline{\alpha}\|\}$, \item[\rm (iv)] $follow(\alpha,
i)=\{j\ |\ ux_ix_jv\in \|\overline{\alpha}\|\}$, \item[\rm (v)]
$follow(\alpha, 0)=first(\alpha)$, \item[\rm (vi)] $
last_0(\alpha)=\begin{cases}
                              last(\alpha), & \varepsilon\not\in \|\alpha\|\\
                              last(\alpha)\cup\{0\}, & \varepsilon\in \|\alpha\|
                              \end{cases}.
$
\end{itemize}
Define $\delta_{pos}\subseteq pos_0(\alpha)\times X\times pos_0(\alpha)$ by
\[
(i,x,j)\in \delta_{pos}\ \ \iff\ \ \overline{x_j}=x\ \ \text{and}\ \ j\in follow(\alpha, i).
\]
Then ${\cal A}_{pos}(\alpha)=(pos_0(\alpha), X, \delta_{pos}, 0,
last_0(\alpha))$ is a nondeterministic automaton called the {\it position automaton\/}~of~$\alpha$.
It was shown by Glushkov \cite{G.61} and McNaughton and Yamada
\cite{MY.60} that $L({\cal A}_{pos}(\alpha))=\|\alpha\|$.

For the sake
of simplicity, instead of  ${\cal A}_{pos}(\alpha)=(pos_0(\alpha), X, \delta_{pos}, 0,last_0(\alpha))$, in the further text we will write ${\cal A}_{\rm
p}(\alpha)=(A_{\rm p}, X, \delta^{A_{\rm p}}, 0, \tau^{A_{\rm
p}})$.

\section{Fuzzy automata from fuzzy regular
expressions:\ Basic construction}\label{sec:fuzzy.autom.exp}

For an $\ell-$monoid ${\cal L}=(L, \land, \lor, \otimes, 0, 1, e)$,  $A,B\subseteq L$ and $\lambda\in L$ we will use the following notation
\[
A\otimes B=\{a\otimes b\mid a\in A,\,b\in B\}, \qquad A\lor B=\{a\lor b\mid a\in A,\,b\in B\}, \qquad \lambda\otimes A=\{\lambda\otimes a\mid a\in A\}.
\]

The following lemma will be useful in our further work.

\begin{lemma}\label{lema1} Let ${\cal L}=(L, \land,
\lor, \otimes, 0, 1, e)$ be an $\ell-$monoid, let $A,B\subseteq L$ and $\lambda\in
L$. If there exist finite sets $C\subseteq A$ and $D\subseteq B$ such that
\[
(\forall a\in A)(\exists c\in C)\ a\le c\mbox{\ \ and\ \
}(\forall b\in B)(\exists d\in D)\ b\le d,
\]
then there exist $\bigvee A$, $\bigvee B$, $\bigvee A\otimes B$, $\bigvee A\vee B$ and $\bigvee \lambda\otimes A$, and we have that $\bigvee A=\bigvee C$, $\bigvee B=\bigvee D$, and
\[
\bigvee A\otimes B=\bigl(\bigvee A\bigr)\otimes\bigl(\bigvee B\bigr), \qquad
\bigvee A\vee B=\bigl(\bigvee A\bigr)\vee\bigl(\bigvee
B\bigr), \qquad \bigvee \lambda\otimes A=\lambda\otimes\bigl(\bigvee A\bigr).
\]
\end{lemma}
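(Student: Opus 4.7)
The plan is to exploit the finite dominating sets $C$ and $D$ to reduce all the (potentially infinite) joins to joins over finite sets, which always exist in the lattice $(L,\land,\lor,0,1)$, and then to invoke the binary distributivity axiom (L4), inductively extended to finite families, to compute them.

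First I would handle $\bigvee A=\bigvee C$ (and symmetrically $\bigvee B=\bigvee D$). Since $C$ is finite, $\bigvee C$ exists in $L$. By hypothesis every $a\in A$ satisfies $a\le c$ for some $c\in C$, hence $a\le\bigvee C$, so $\bigvee C$ is an upper bound of $A$. On the other hand $C\subseteq A$, so any upper bound of $A$ is an upper bound of $C$ and is therefore $\ge\bigvee C$. Hence $\bigvee A$ exists and equals $\bigvee C$; the same argument handles $B$.

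Next I would treat $A\otimes B$. By isotonicity of $\otimes$ (equation~\eqref{isotonost}), for any $a\otimes b\in A\otimes B$, choosing $c\in C$, $d\in D$ with $a\le c$, $b\le d$ gives $a\otimes b\le c\otimes d\in C\otimes D$. Since $C\otimes D$ is finite, $\bigvee(C\otimes D)$ exists and, exactly as in the previous paragraph, $\bigvee(A\otimes B)$ exists and equals $\bigvee(C\otimes D)$. To identify this with $(\bigvee A)\otimes(\bigvee B)=(\bigvee C)\otimes(\bigvee D)$, I apply (L4) twice (inductively extended from two summands to the finite families $C$ and $D$):
\[
\bigl(\bigvee C\bigr)\otimes\bigl(\bigvee D\bigr)=\bigvee_{c\in C}\bigl(c\otimes\bigvee D\bigr)=\bigvee_{c\in C}\bigvee_{d\in D}(c\otimes d)=\bigvee(C\otimes D).
\]
The same strategy handles $\lambda\otimes A$: the finite set $\lambda\otimes C$ dominates $\lambda\otimes A$ by isotonicity, so $\bigvee(\lambda\otimes A)=\bigvee(\lambda\otimes C)$, which equals $\lambda\otimes\bigvee C=\lambda\otimes\bigvee A$ by the finite version of (L4).

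Finally, for $A\vee B$, the finite set $C\vee D$ dominates $A\vee B$ because $a\le c$ and $b\le d$ imply $a\vee b\le c\vee d$, so $\bigvee(A\vee B)=\bigvee(C\vee D)$; and associativity/commutativity of the lattice join gives $\bigvee(C\vee D)=(\bigvee C)\vee(\bigvee D)=(\bigvee A)\vee(\bigvee B)$. No step is genuinely difficult; the only point requiring care is that $L$ is not assumed complete, so one must not write $\bigvee A$ before proving it exists, and this is precisely what the finite dominating sets $C,D$ are used for.
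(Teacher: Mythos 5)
Your proof is correct: the paper itself gives no argument for this lemma (it states only that the proof ``is elementary and will be omitted''), and your reduction to the finite dominating sets $C$, $D$ followed by the finitely-iterated form of (L4) (together with isotonicity of $\otimes$ and the lattice laws for $\vee$) is exactly the standard elementary argument the authors had in mind. The only point you leave implicit is the degenerate case where $C$ (hence $A$) or $D$ (hence $B$) is empty, where one falls back on (L3) and the convention $\bigvee\emptyset=0$; this affects the statement itself rather than your reasoning.
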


\begin{proof}
The proof of this lemma is elementary and will be omitted.
\end{proof}

Let $\cal L$ be an $\ell-$monoid and let $\alpha$ be a fuzzy
regular expression over a finite alphabet $X.$ Let $K$ be the set
of all $\lambda\in L$ appearing in $\alpha$ (if
$\alpha$ is a fuzzy regular expression without scalar multiplication then $K=\emptyset$) and
let $Y$ be an alphabet such that $Y\cap X=\emptyset$ and
$|K|=|Y|$, and let $\lambda\mapsto\lambda '$ be an arbitrary bijective mapping
from $K$ to $Y$. We will call $Y$   the {\it alphabet associated with\/}
$\alpha$.~It is clear that $Y$ is finite.

Let us denote by $\alpha_R$ the expression obtained from $\alpha$ by replacing each
$\lambda\in K$ by the corresponding letter $\lambda '\in Y$. Obviously, $\alpha_R$ is a regular expression over the alphabet
$X\cup Y$. Further, $\|\alpha_R\|$ is considered as a fuzzy
language over an alphabet $X\cup Y$, taking values in the set
$\{0, e\}\subseteq L.$

Let $\varphi_\alpha:X\cup Y\to L$ be a mapping defined by
\begin{equation}\label{eq:ned-fuzzy}
\varphi_\alpha(x)=
\begin{cases}
\ e, & \text{if}\ x\in X\\
\ \lambda, & \text{if}\ x=\lambda '\in Y
\end{cases}\ ,
\end{equation}
for any $x\in X\cup Y$. Denote by $\varphi_\alpha^*$ a homomorphism from the
monoid $(X\cup Y)^*$ into the monoid $(L,\otimes, e)$ defined by:
$\varphi_\alpha^*(\varepsilon)=e$ and
$\varphi_\alpha^*(u)=\varphi_\alpha(x_1)\otimes\varphi_\alpha(x_2)\otimes\cdots\otimes\varphi_\alpha(x_n),$
for any $u=x_1x_2\cdots x_n$ with $x_1,\ldots,x_n\in (X\cup Y)^*$.

\begin{example}\rm\label{pr3}
Let ${\cal L}$ be an arbitrary $\ell-$monoid and let $\alpha$ be a
fuzzy regular expression over an alphabet $X.$ If $\alpha$ is
without scalars then $\alpha_R=\alpha.$
\end{example}
\begin{example}\rm\label{pr4}
Let ${\cal L}$ be the G\" odel structure. Consider
$\alpha=0.2((0.1(xy)^*)^*+y),$ a fuzzy regular expression~over the
alphabet $\{x,y\}$. An expression
$\alpha_R=\lambda((\mu(xy)^*)^*+y)$ is a regular expression over
the alphabet $\{x,y,\lambda, \mu\},$ obtained from $\alpha$ by
replacing $0.2$ with $\lambda$ and $0.1$ with $\mu.$ The mapping
$\varphi_{\alpha}$ is given by
\[
\varphi_{\alpha}=
\begin{pmatrix}
x & y & \lambda & \mu \\
1 & 1 & 0.2 & 0.1
\end{pmatrix}.
\]
\end{example}
\begin{example}\rm\label{pr5}
Consider a fuzzy regular expression $\alpha=(0.1x^*)(yx+0.8y)^*,$
where $\cal L$ is the product structure. Then $\alpha_R=(\lambda
x^*)(yx+\mu y)^*$ is a regular expression over the alphabet
$\{x,y,\lambda,\mu\},$ where $\lambda$  replaces
$0.1$ and $\mu$ replaces $0.8.$ The mapping $\varphi_{\alpha}$ is
given by
\[
\varphi_{\alpha}=\begin{pmatrix}
x & y & \lambda & \mu \\
1 & 1 & 0.1 & 0.8
\end{pmatrix}.
\]
\end{example}

Now we prove the following.

\begin{lemma}\label{lema2}
Let ${\cal L}$ be an integral $\ell-$monoid, and let $X$ be an
arbitrary alphabet. Then every homomorphism $\varphi$ from the monoid
$X^*$ into the monoid $(L,\otimes, 1)$ is antitone, i.e.,
\begin{equation}\label{eq:antitonic}
u\le_{em}v\quad\implies\quad\varphi(v)\le\varphi(u),
\end{equation}for all $u,v\in X^*$. Furthermore, for any $U\subseteq X^*$ and any $\gamma:X^*\to \{0,1\}$ there exists $\bigvee \{\varphi (u)\otimes\gamma
(u)\mid u\in U\}$ and
\begin{equation}\label{eq:UMU}
\bigvee_{u\in U} \varphi (u)\otimes \gamma (u) =\bigvee_{u\in M(U')} \varphi (u)\otimes \gamma(u),
\end{equation}
where $U'=\{u\in U\mid \gamma(u)=1\}$.
\end{lemma}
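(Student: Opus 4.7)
The proof splits naturally into two parts: the antitonicity of $\varphi$, and the reduction of the supremum to one over $M(U')$.

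For the antitonicity claim, suppose $u\le_{em}v$, so that by definition $u=u_1u_2\cdots u_n$ and $v=v_0u_1v_1u_2\cdots v_{n-1}u_nv_n$ for some $u_i,v_j\in X^*$. Since $\varphi$ is a homomorphism,
\[
\varphi(v)=\varphi(v_0)\otimes\varphi(u_1)\otimes\varphi(v_1)\otimes\cdots\otimes\varphi(v_{n-1})\otimes\varphi(u_n)\otimes\varphi(v_n).
\]
The key observation is that $\cal L$ is integral, so $1=e$, and hence $\varphi(v_j)\le 1=e$ for every $j$. Applying (\ref{isotonost}) repeatedly, each factor $\varphi(v_j)$ may be replaced on the right-hand side by $e$ without decreasing the value, and since $e$ is the monoid unit those factors then disappear, leaving $\varphi(u_1)\otimes\cdots\otimes\varphi(u_n)=\varphi(u)$. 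Thus $\varphi(v)\le\varphi(u)$.

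For the supremum claim, since $\gamma$ takes values in $\{0,1\}$ and $a\otimes 0=0$, we have $\varphi(u)\otimes\gamma(u)=0$ for $u\notin U'$ and $\varphi(u)\otimes\gamma(u)=\varphi(u)\otimes 1=\varphi(u)$ for $u\in U'$. Hence the set $A=\{\varphi(u)\otimes\gamma(u)\mid u\in U\}$ has the same upper bounds as $\{\varphi(u)\mid u\in U'\}\cup\{0\}$. By Proposition~\ref{prop1}, the set $M(U')$ of minimal words of $U'$ with respect to $\le_{em}$ is finite. The plan is now to invoke Lemma~\ref{lema1} with the finite set $C=\{\varphi(u')\mid u'\in M(U')\}\subseteq A$ (adding $0$ if necessary): for every $u\in U'$, the standard well-quasi-order argument, already implicit in Proposition~\ref{prop1}, guarantees the existence of some $u'\in M(U')$ with $u'\le_{em}u$; antitonicity then yields $\varphi(u)\le\varphi(u')$, and for $u\in U\setminus U'$ the term is $0$ and trivially bounded.

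Having verified the hypothesis of Lemma~\ref{lema1}, we conclude that $\bigvee A$ exists and equals $\bigvee C$, i.e.,
\[
\bigvee_{u\in U}\varphi(u)\otimes\gamma(u)=\bigvee_{u'\in M(U')}\varphi(u')=\bigvee_{u\in M(U')}\varphi(u)\otimes\gamma(u),
\]
the last equality using that $\gamma\equiv 1$ on $M(U')\subseteq U'$ and that $1$ is the monoid unit. The only subtle point is the passage from $u\in U'$ to the existence of $u'\in M(U')$ with $u'\le_{em}u$; this is the main obstacle, and I would address it by noting that any strictly decreasing $\le_{em}$-chain in $U'$ starting from $u$ must be finite (lengths of words are non-negative integers and strict $\le_{em}$-descent decreases length), so the minimum of such a chain lies in $M(U')$. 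Everything else reduces to a direct application of Lemma~\ref{lema1} together with the antitonicity established in the first part.
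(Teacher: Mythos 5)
Your proof is correct and follows essentially the same route as the paper's: antitonicity via the embedding decomposition, the homomorphism property, isotonicity of $\otimes$ and $1=e$ in the integral case, and then the supremum reduction via Proposition~\ref{prop1} and Lemma~\ref{lema1} after bounding each $\varphi(u)$, $u\in U'$, by $\varphi(w)$ for some minimal $w\le_{em}u$. Your explicit finite-descent justification for the existence of such a minimal $w$ (and the remark on the $U'=\emptyset$ case) is just a slightly more detailed rendering of what the paper leaves implicit.
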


\begin{proof}
If $u\le_{em} v$, then by (\ref{eq:embedding}) we have
\[
u=u_1u_2\cdots u_n\mbox{\ \ and\ \ }v=v_0u_1v_1u_2\cdots
v_{n-1}u_nv_n,
\]
where $n\in \Bbb N$ and $u,v,u_1,u_2,\dots,u_n,v_0v_1,\dots,v_n\in
X^*.$ Consequently, according to (\ref{isotonost}) ,we have
\[
\begin{aligned}
\varphi(v)&=\varphi(v_0)\otimes\varphi(u_1)\otimes\varphi(v_1)\otimes\varphi(u_2)\otimes\cdots
\otimes\varphi(v_{n-1})\otimes\varphi(u_n)\otimes\varphi(v_n)\\
&\le 1\otimes\varphi(u_1)\otimes 1\otimes\varphi(u_2)\otimes\cdots
\otimes 1\otimes\varphi(u_n)\otimes 1 = \varphi(u_1)\otimes \varphi(u_2)\otimes\cdots \otimes
\varphi(u_n)=\varphi(u).
\end{aligned}
\]
Therefore, $\varphi (v)\le \varphi (u)$.

Further, for any $U\subseteq X^*$, $\gamma :X^*\to \{0,1\}$, and $u\in U'=\{v\in
U\mid \gamma(v)=1\}$ there exists $w\in M(U')$ such~that $w\le_{em} u$, and by (\ref{eq:antitonic}) it follows that $\varphi(u)\le_{em}\varphi (w)$. According to
Proposition \ref{prop1}, we have that $M(U')$ is~finite, and by Lemma \ref{lema1} we obtain that $\bigvee \{\varphi (u)\otimes \gamma(u)\mid u\in U\}=\bigvee \{\varphi (u)\mid u\in U'\}$ exists and (\ref{eq:UMU}) holds.
\end{proof}

In particular, for a given  regular expression $\alpha$, the homomorphism $\varphi^*_\alpha$ satisfies~(\ref{eq:antitonic}) and (\ref{eq:UMU}).

Let $Z$ be an alphabet. The {\it shuffle\/} operation, denoted by
$\shuffle$ is defined in the following way
\begin{equation}\label{eq:shuffle1}
u\shuffle v=\bigl\{u_1v_1u_2v_2\cdots u_nv_n\bigm | u=u_1u_2\dots u_n,
v=v_1v_2\dots v_n, u_i, v_i\in Z^*,1\le i\le n, n\in\Bbb N \bigr\},
\end{equation}
where $u,v\in Z^*$.

The above operation is naturally extended to languages by the {\it
shuffle\/} of languages, defined as
\begin{equation}\label{eq:shuffle3}
L_1\shuffle L_2=\bigcup_{u\in L_1,v\in L_2}u\shuffle v.
\end{equation}where $L_1,L_2\subseteq Z^*$.

Let us return now to the fuzzy regular expression $\alpha $ over the alphabet $X$ and the alphabet $Y$ associated with $\alpha $. Supposing
$\emptyset^*=\{\varepsilon\}$, for any $u\in X^*$ we define a language $U_Y(u)\subseteq (X\cup Y)^*$ by
\[
U_Y(u)=u\shuffle Y^*.
\]
It is easy
to check that the following holds
\begin{gather}
U_Y(\varepsilon)=Y^*,\label{eq:sh1}\\
\text{If}\ Y=\emptyset\ \text{then}\ U_Y(u)=\{
u\},\qquad\text{for every}\ u\in X^* ,\label{eq:sh2}\\
U_Y(u)U_Y(v)=U_Y(uv),\qquad\text{for all}\ u, v\in X^*,\label{eq:sh3}\\
U_Y(x)=Y^*xY^*,\qquad\text{for every}\ x\in X,\label{eq:sh4}
\end{gather}
where the set $U_Y(u)U_Y(v)$ is the concatenation of sets $U_Y(u)$
and $U_Y(v)$, and $Y^*xY^*$ is the concatenation of $Y^*$, $\{
x\}$ and $Y^*$.

One of the main results of this paper is the following theorem.

\begin{theorem}\label{lema3} Let $\cal L$ be an integral $\ell-$monoid. Let $\alpha$ be a fuzzy regular expression over a finite
alphabet $X$, and let $Y$ be an alphabet associated with $\alpha$.
Then
\begin{equation}\label{reg}
\|\alpha\|(u)=\bigvee_{v\in
U_Y(u)}\varphi^*_\alpha(v)\otimes\|\alpha_R\|(v),
\end{equation} for every $u\in X^*.$
\end{theorem}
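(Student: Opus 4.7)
The plan is to prove (\ref{reg}) by structural induction on the fuzzy regular expression $\alpha$. Throughout I keep the alphabet $Y$ and the homomorphism $\varphi_\alpha^*$ of the outermost expression fixed, and verify the identity for every subexpression $\beta$ using this same $Y$: any letter of $Y$ that does not belong to the sub-alphabet $Y_\beta$ of $\beta$ produces a word $v\notin (X\cup Y_\beta)^*$, and hence $\|\beta_R\|(v)=0$, so summing over $U_Y(u)$ gives the same result as summing over $U_{Y_\beta}(u)$. The base cases $\alpha=\emptyset$, $\alpha=\varepsilon$ and $\alpha=x\in X$ follow from direct inspection of the definitions of $\|\alpha\|$, $\alpha_R$ and $U_Y(u)$, using (\ref{eq:sh1}) and (\ref{eq:sh4}).

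For the scalar case $\alpha=\lambda\beta$, I will use $\alpha_R=\lambda'\beta_R$ together with the observation that the map $w\mapsto\lambda' w$ is a bijection between $\{w\in U_Y(u):w\in\|\beta_R\|\}$ and $\{v\in U_Y(u):v\in\|\lambda'\beta_R\|\}$ (peel $\lambda'$ off the leading $Y^*$-block of any shuffle decomposition), and then combine this with $\varphi_\alpha^*(\lambda' w)=\lambda\otimes\varphi_\alpha^*(w)$ and the inductive hypothesis. The union case $\alpha=\beta_1+\beta_2$ follows from distributivity (L4) after substituting the two inductive hypotheses and splitting the join. For concatenation $\alpha=\beta_1\beta_2$, I will expand $\|\beta_1\beta_2\|(u)$, apply the inductive hypothesis to each $\beta_i$, and collect everything into one join indexed by tuples $(u_1,u_2,v_1,v_2)$ with $u=u_1u_2$ and $v_i\in U_Y(u_i)$. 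Each factor $\|\beta_{i,R}\|(v_i)\in\{0,e\}$ commutes with $\varphi_\alpha^*(v_2)$; the homomorphism property merges $\varphi_\alpha^*(v_1)\otimes\varphi_\alpha^*(v_2)$ into $\varphi_\alpha^*(v_1v_2)$, and the shuffle identity (\ref{eq:sh3}) identifies the resulting pairs $(v_1,v_2)$ with the two-factor decompositions of words in $U_Y(u)$, which rearranges the sum into the required form.

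The main obstacle is the Kleene case $\alpha=\beta^*$. Proposition \ref{prop3} makes $\|\beta^*\|=\bigvee_{n\ge 0}\|\beta\|^n$ well defined (this is where integrality of $\cal L$ enters crucially), and $(\beta^*)_R=\beta_R^*$. A secondary induction on $n$, whose inductive step is the concatenation case just established, yields
\[
\|\beta\|^n(u)=\bigvee_{v\in U_Y(u)}\varphi_\alpha^*(v)\otimes\|\beta_R\|^n(v),\qquad n\ge 0.
\]
Taking $\bigvee_n$ on both sides turns the left-hand side into $\|\beta^*\|(u)$. Swapping the two joins on the right is the delicate point: Lemma \ref{lema2}, applied to the ordinary regular expression $\beta_R^*$, guarantees existence of $\bigvee_{v\in U_Y(u)}\varphi_\alpha^*(v)\otimes\|\beta_R^*\|(v)$; and for each fixed $v$ every $\|\beta_R\|^n(v)\in\{0,e\}$, so the inner sup collapses to $\varphi_\alpha^*(v)$ or $0$ according as $v\in\|\beta_R^*\|$ or not, which is precisely $\varphi_\alpha^*(v)\otimes\|\beta_R^*\|(v)$. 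This closes the induction and completes the proof.
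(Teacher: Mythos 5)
Your proof is correct, and for most of the induction (base cases, scalar multiplication, union, concatenation) it follows essentially the same route as the paper: fix the extended alphabet, exploit that $\|\beta_R\|$ vanishes outside $(X\cup Y_\beta)^*$, commute the crisp values $\|\beta_R\|(v)\in\{0,e\}$ past $\varphi_\alpha^*$, and reindex via the shuffle identity (\ref{eq:sh3}). Where you genuinely diverge is the star case. The paper works with the truncations $\beta_n=\varepsilon+\beta+\cdots+\beta^n$ and invokes a fact extracted from the \emph{proof} of Proposition~\ref{prop3} in Li--Pedrycz, namely that for each $u$ there is an $n$ with $\|\beta^*\|(u)\le\|\beta_n\|(u)$; equality then follows from a two-sided sandwich. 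You instead prove $\|\beta\|^n(u)=\bigvee_{v\in U_Y(u)}\varphi_\alpha^*(v)\otimes\|\beta_R\|^n(v)$ by a secondary induction on $n$ and then interchange the iterated suprema, using only the \emph{statement} of Proposition~\ref{prop3} (existence of the Kleene closure), Lemma~\ref{lema2} for existence of the target supremum, and the collapse of the inner supremum over $n$ thanks to $\|\beta_R\|^n(v)\in\{0,e\}$; the interchange itself is a routine order-theoretic verification once all the relevant suprema are known to exist, and since $\mathscr L$ need not be complete this existence bookkeeping is exactly the right thing to flag. Your route buys independence from the internals of the Li--Pedrycz proof, and your secondary induction on powers is actually tidier than the paper's appeal to the induction hypothesis for $\beta_n$, whose length exceeds that of $\beta^*$. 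Two presentational cautions: the same finiteness machinery (Lemmas~\ref{lema1} and~\ref{lema2}, via minimal words and Proposition~\ref{prop1}) is what licenses pulling $\lambda$ out of the join in the scalar case, splitting the join in the union case, and turning products of joins into joins of products in the concatenation case --- citing only (L4) there is not enough, since the joins involved are infinite; and in the scalar case you should also note, as the paper does, that words of $U_Y(u)$ containing letters of $Y\setminus Y_1$ contribute $0$, so your bijection $w\mapsto\lambda'w$ really accounts for all nonzero terms.
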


\begin{proof}
Consider an arbitrary $u\in X^*$.

For $U=U_Y(u)$, if the set $U'=\{v\in U\mid \|\alpha_R\|(v)=1\}$ is non-empty, then by Lemma \ref{lema2} it follows that the supremum~on the right side of (\ref{reg}) exists, and
\begin{equation}\label{eq:sup.ex}
\bigvee_{v\in U_Y(u)}\varphi^*_\alpha(v)\otimes\|\alpha_R\|(v)  =
\bigvee_{v\in M(U')}\varphi^*_\alpha(v)\otimes\|\alpha_R\|(v).
\end{equation}
Otherwise, if $U'=\emptyset $, then
\[
\bigvee_{v\in U_Y(u)}\varphi^*_\alpha(v)\otimes\|\alpha_R\|(v)=0.
\]
Thus, we have proved that the supremum on the right side of (\ref{reg}) always exists.

Further, if $\alpha$ is a fuzzy regular expression without scalar multiplication,
i.e., if $Y=\emptyset,$ then $U_Y(u)=\{u\},$ $\alpha=\alpha_R$,
$\|\alpha\|=\|\alpha_R\|$, and $\varphi^*_\alpha(v)=1$ for every
$v\in X^*=(X\cup Y)^*$. As a result, we have
\[
\bigvee_{v\in
U_Y(u)}\varphi^*_\alpha(v)\otimes\|\alpha_R\|(v)=\|\alpha_R\|(u)=\|\alpha\|(u).
\]

The rest of the proof will be done by induction of the length of the
fuzzy regular expression $\alpha.$ Suppose that~(\ref{reg}) holds for
arbitrary  fuzzy regular expressions whose
length is less than the length of $\alpha$.

Let $\alpha=\lambda\beta,$ for $\lambda\in L$ and $\beta \in \cal{LR}$, and let $Y_1\subseteq Y$ be the alphabet associated with $\beta $.~For each $v\in
(X\cup Y)^*$ we have
\[
\|\alpha_R\|(v)=\begin{cases}
\ \|\beta_R\|(w) & \text{if}\ v=\lambda ' w,\ \text{for some}\ w\in (X\cup Y)^*\\
\ 0 & \text{otherwise}
\end{cases}.
\]
For~every $w\in (X\cup Y_1)^*$ we have that $\varphi_\alpha^*(w)=\varphi_\beta^*(w)$
and $\lambda'w\in U_Y(u)$ if and only if $w\in U_{Y_1}(u)$, and also, for every $w\in (X\cup Y)^*$ which contains a letter from $Y\setminus Y_1$
we have that $\| \beta_R\|(w)=0$.~Consequently,
\[
\begin{aligned}
\bigvee_{v\in
U_Y(u)}\varphi^*_\alpha(v)\otimes\|\alpha_R\|(v)&=\bigvee_{w\in
U_{Y_1}(u)}\varphi^*_\alpha(\lambda
'w)\otimes\|\beta_R\|(w)=^*\lambda\otimes\bigvee_{w\in
U_{Y_1}(u)}\varphi^*_\beta(w)\otimes\|\beta_R\|(w)=\lambda\otimes\|\beta\|(u)=\|\alpha\|(u).
\end{aligned}
\]
The equality  marked with * follows by Lemmas \ref{lema1} and \ref{lema2}.

Let $\alpha=\beta+\gamma$, for $\beta,\gamma\in \cal{LR}$, let $Y_1\subseteq Y$ be the alphabet associated with $\beta $,
and let $Y_2\subseteq Y$ be the alphabet associated with $\gamma $.~For every
$v\in (X\cup Y)^*$ we have that the following is true
\begin{align}
&\|\alpha_R\|(v)=\|\beta_R\|(v)\vee\|\gamma_R\|(v),\notag \\
&\varphi_\alpha^*(v)=\varphi_\beta^*(v),\ \text{for}\ v\in (X\cup Y_{1})^*,\
\text{and}\ \varphi_\alpha^*(v)=\varphi_\gamma^*(v),\ \text{for}\ v\in  (X\cup Y_{2})^*,\label{eq:cond.phi} \\
&\|\beta_R\|(v)=0,\ \text{for}\ v\notin (X\cup Y_{1})^*,\ \text{and}\ \|\gamma_R\|(v)=0,\ \text{for}\ v\notin (X\cup Y_{2})^*.\label{eq:cond.bg}
\end{align}
Therefore,
\[
\begin{aligned}
&\bigvee_{v\in
U_Y(u)}\varphi^*_\alpha(v)\otimes\|\alpha_R\|(v)=\bigvee_{v\in
U_Y(u)}\varphi^*_\alpha(v)\otimes(\|\beta_R\|(v)\vee\|\gamma_R\|(v))\\
&\qquad\qquad=^*\big(\bigvee_{v\in
U_Y(u)}\varphi^*_\alpha(v)\otimes\|\beta_R\|(v)\big)\vee\big(\bigvee_{v\in
U_Y(u)}\varphi^*_\alpha(v)\otimes\|\gamma_R\|(v)\big)\\
&\qquad\qquad=\big(\bigvee_{v\in
U_{Y_1}(u)}\varphi^*_\beta(v)\otimes\|\beta_R\|(v)\big)\vee\big(\bigvee_{v\in
U_{Y_2}(u)}\varphi^*_\gamma(v)\otimes\|\gamma_R\|(v)\big) =\|\beta\|(u)\vee\|\gamma\|(u)=\|\alpha\|(u),
\end{aligned}
\]
The equality marked with * follows by Lemmas \ref{lema1} and \ref{lema2}.

Next, let $\alpha=\beta\gamma$, for $\beta,\gamma\in \cal{LR}$, let $Y_1\subseteq Y$ be the alphabet associated with $\beta $,
and let $Y_2\subseteq Y$ be the alphabet associated with $\gamma $.~Then (\ref{eq:cond.phi}) and (\ref{eq:cond.bg}) hold, and $\|\alpha_R\|(v)=\bigvee_{v=wp}\|\beta_R\|(w)\otimes\|\gamma_R\|(p)$, for every $v\in (X\setminus Y)^*$.
 Thus
\[
\begin{aligned}
\bigvee_{v\in
U_Y(u)}\varphi^*_\alpha(v)\otimes\|\alpha_R\|(v)&=\bigvee_{v\in
U_Y(u)}\varphi^*_\alpha(v)\otimes\bigvee_{v=wp}(\|\beta_R\|(w)\otimes\|\gamma_R\|(p))\\
&=^*\bigvee_{v\in U_Y(u)}\ \bigvee_{v=wp} \big(\varphi^*_\alpha(w)\otimes\|\beta_R\|(w)\big)\otimes\big(\varphi^*_\alpha(p)\otimes\|\gamma_R\|(p)\big)\\
&=\bigvee_{u=qr}\ \bigvee_{w\in U_Y(q),\, p\in
U_Y(r)}\big(\varphi^*_\alpha(w)\otimes\|\beta_R\|(w)\big)\otimes\big(\varphi^*_\alpha(p)\otimes\|\gamma_R\|(p)\big)\\
&=\bigvee_{u=qr}\ \bigvee_{w\in U_{Y_1}(q),\, p\in
U_{Y_2}(r)}\big(\varphi^*_\beta(w)\otimes\|\beta_R\|(w)\big)\otimes\big(\varphi^*_\gamma(p)\otimes\|\gamma_R\|(p)\big)\\
&=^{**} \bigvee_{u=qr}\biggl( \big(\bigvee_{w\in U_{Y_1}(q)}\varphi^*_\beta(w)\otimes\|\beta_R\|(w)\big)\otimes\big(\bigvee_{p\in
U_{Y_2}(r)}\varphi^*_\gamma(p)\otimes\|\gamma_R\|(p)\big)\biggr)\\
&=\bigvee_{u=qr}\|\beta(q)\|\otimes\|\gamma(r)\|=\|\alpha\|(u),
\end{aligned}
\]
The equality marked with * follows by
$\varphi^*_\alpha(p)\otimes\|\beta_R\|(w)=\|\beta_R\|(w)\otimes\varphi^*_\alpha(p)$, which is true
since $\|\beta_R\|(w)\in \{0,1\}$, and the equality marked with ** follows by Lemmas \ref{lema1} and \ref{lema2}.

Finally, let $\alpha=\beta^*$, for $\beta\in \cal{LR}$, and for any $n\in \mathbb N$ let $\beta_n=\varepsilon+\beta+\cdots+\beta^n$.~Clearly, $\beta $ and $\beta_n$ have the same associated alphabet as $\alpha $, the alphabet $Y$.~Also,
$\|\beta_n\|(u)\le\|\alpha\|(u),$ for all $u\in X^*$~and $n\in
\Bbb N$.~In addition, by the proof of Proposition \ref{prop3} (cf.~\cite[p.~80]{LP.05}),
we have that for every $u\in X^*$ there exists~$n\in \Bbb N$ such that $\|\alpha\|(u)\le\|\beta_n\|(u)$, and then
\[
\|\alpha\|(u)\le\|\beta_n\|(u)=\bigvee_{v\in
U_Y(u)}\varphi^*_\alpha(v)\otimes\|(\beta_n)_R\|(v)\le
\bigvee_{v\in U_Y(u)}\varphi^*_\alpha(v)\otimes\|\alpha_R\|(v),
\]
for every $u\in X^*$.~Conversely, for every $u\in X^*$ and $v\in U_Y(u)$ there exists $m\in \Bbb
N$ such that
\[
\varphi^*_\alpha(v)\otimes\|\alpha_R\|(v)\le\varphi^*_\alpha(v)\otimes\|(\beta_m)_R\|(v)\le\|\beta_m\|(u)\le\|\alpha\|(u).
\]
In conclusion,
\[
\|\alpha\|(u)=\bigvee_{v\in U_Y(u)}\varphi^*_\alpha(v)\otimes\|\alpha_R\|(v),
\]
which completes the proof of the theorem.
\end{proof}

For a fuzzy regular expression $\alpha$ over an alphabet $X,$ let
$\alpha_R$ be a regular expression over an alphabet $X\cup Y,$
where $Y$ is an alphabet associated with $\alpha$. Now, let ${\cal
A}=(A, X\cup Y, \delta^A, a_0, \tau^A)$ be an arbitrary
nondeterministic automaton recognizing the language
$\|\alpha_R\|.$ Evidently, the automaton $\cal A$, considered as a fuzzy automaton,
recognizes the fuzzy language $\|\alpha_R\|.$ Further, let ${\cal
A}_\alpha=(A_\alpha, X, \delta^{A_\alpha}, a_0^\alpha, \tau^{A_\alpha})$
be a fuzzy automaton with   $A_\alpha=A$, $a_0^{\alpha}=a_0$,
 and a fuzzy transition relation $\delta^{A_\alpha}$
defined by
\begin{equation}\label{eq:fuzzyrec.1}
\delta^{A_\alpha}(a, x, b)=\bigvee_{v\in
U_Y(x)}\varphi_\alpha^*(v)\otimes\delta^A(a, v, b),
\end{equation}
for all $a,b\in A_\alpha$ and $x\in X$,
\begin{equation}\label{eq:fuzzyrec.2}
\tau^{A_\alpha}(a)=\bigvee_{v\in Y^*}\bigvee_{b\in
A}\varphi_\alpha^*(v)\otimes\delta^A(a, v,
b)\otimes\tau^A(b),
\end{equation}
or equivalently,
\[
\tau^{A_\alpha}(a)=\bigvee_{v\in
Y^*}\varphi_\alpha^*(v)\otimes(\delta^A_v\circ\tau^A)(a).
\]
for each $a\in A_\alpha$.~Note that the existence of the above suprema by
$v\in U_Y(x)$ and $v\in Y^*$ follows immediately by equation (\ref{eq:UMU})
in Lemma \ref{lema2}.

We prove the following fundamental result.

\begin{theorem}\label{th:1}
Let $\cal L$ be an integral $\ell-$monoid, let $\alpha $ be a fuzzy regular expression, and let ${\cal A}=(A,X\cup Y, \delta^A, a_0,\tau^A)$ be an arbitrary nondeterministic
automaton which recognizes $\|\alpha_R\|$.

Then ${\cal A}_\alpha=(A_\alpha, X, \delta^{A_\alpha}, a_0, \tau^{A_\alpha})$ is a well-defined fuzzy automaton and it recognizes the fuzzy language $\|\alpha\|$.
\end{theorem}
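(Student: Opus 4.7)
The plan is to verify that $L({\cal A}_\alpha)(u) = \|\alpha\|(u)$ for every $u\in X^*$, by reducing to Theorem~\ref{lema3}. Specifically, I will show
\[
L({\cal A}_\alpha)(u) \;=\; \bigvee_{v\in U_Y(u)}\varphi_\alpha^*(v)\otimes\|\alpha_R\|(v),
\]
after which Theorem~\ref{lema3} immediately identifies the right-hand side with $\|\alpha\|(u)$. Well-definedness of $\delta^{A_\alpha}$ and $\tau^{A_\alpha}$ amounts to showing that the suprema in (\ref{eq:fuzzyrec.1}) and (\ref{eq:fuzzyrec.2}) exist in $\cal L$: since $\cal A$ is a nondeterministic automaton, $\delta^A(a,v,b)$ and $\tau^A(b)$ take values in $\{0,e\}$, so the maps $v\mapsto\delta^A(a,v,b)$ and $v\mapsto\bigvee_{b}\delta^A(a,v,b)\otimes\tau^A(b)$ are $\{0,e\}$-valued, and Lemma~\ref{lema2} applied with $\varphi=\varphi_\alpha^*$ and these $\gamma$'s shows that each supremum exists and reduces to a finite supremum over $M(U')$.

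For the main computation, write $u=x_1x_2\cdots x_n$ with $x_i\in X$. By (\ref{rellanguageinit}) and repeated use of (\ref{eextension}),
\[
L({\cal A}_\alpha)(u) \;=\; \bigvee_{a_1,\dots,a_n\in A_\alpha} \delta^{A_\alpha}(a_0,x_1,a_1)\otimes\cdots\otimes\delta^{A_\alpha}(a_{n-1},x_n,a_n)\otimes\tau^{A_\alpha}(a_n).
\]
Substituting the definitions (\ref{eq:fuzzyrec.1}) and (\ref{eq:fuzzyrec.2}), pulling the suprema outside (legitimate by Lemma~\ref{lema1} combined with the existence guarantee of Lemma~\ref{lema2}), and using the fact that $\varphi^*_\alpha$ is a monoid homomorphism so $\varphi^*_\alpha(v_1)\otimes\cdots\otimes\varphi^*_\alpha(v_{n+1})=\varphi^*_\alpha(v_1v_2\cdots v_{n+1})$, the expression becomes
\[
\bigvee_{\substack{v_i\in U_Y(x_i)\\ v_{n+1}\in Y^*}}\ \bigvee_{\substack{a_1,\dots,a_n\in A\\ b\in A}} \varphi_\alpha^*(v_1\cdots v_{n+1})\otimes\delta^A(a_0,v_1,a_1)\otimes\cdots\otimes\delta^A(a_{n-1},v_n,a_n)\otimes\delta^A(a_n,v_{n+1},b)\otimes\tau^A(b).
\]
Collapsing the inner supremum over $a_1,\ldots,a_n$ via the extension property (\ref{extension}) replaces the chain of $\delta^A$-factors by a single $\delta^A(a_0,v_1v_2\cdots v_{n+1},b)$.

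To finish, by (\ref{eq:sh1}) and (\ref{eq:sh3}) the product $v_1v_2\cdots v_{n+1}$ ranges over $U_Y(x_1)\cdots U_Y(x_n)Y^*=U_Y(u)$, and conversely every word in $U_Y(u)$ admits such a decomposition, so the tuple-indexed supremum coincides with the $v$-indexed supremum
\[
L({\cal A}_\alpha)(u) \;=\; \bigvee_{v\in U_Y(u)} \varphi_\alpha^*(v) \otimes \bigvee_{b\in A}\delta^A(a_0,v,b)\otimes\tau^A(b) \;=\; \bigvee_{v\in U_Y(u)} \varphi_\alpha^*(v)\otimes\|\alpha_R\|(v),
\]
where the last equality uses (\ref{languageinit}) and the hypothesis that $\cal A$ recognizes $\|\alpha_R\|$. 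Theorem~\ref{lema3} then yields $\|\alpha\|(u)$. The main obstacle is the careful bookkeeping of infinite suprema throughout the distributivity steps: since $\cal L$ is only an integral $\ell$-monoid and not a quantale, each swap of $\bigvee$ with $\otimes$ or $\vee$ must be justified by exhibiting a finite dominating subset through Lemma~\ref{lema2}, which in turn relies essentially on integrality of $\cal L$ to obtain antitonicity of $\varphi_\alpha^*$ with respect to the embedding order $\le_{em}$.
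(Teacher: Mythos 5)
Your proposal is correct and takes essentially the same route as the paper's proof: both reduce $L({\cal A}_\alpha)(u)$ to $\bigvee_{v\in U_Y(u)}\varphi^*_\alpha(v)\otimes\|\alpha_R\|(v)$ via Lemmas \ref{lema1} and \ref{lema2}, the shuffle identities (\ref{eq:sh1})--(\ref{eq:sh3}), and Theorem \ref{lema3}, with the paper merely packaging your unrolled chain computation as an induction on $|u|$ establishing $\delta^{A_\alpha}(a,u,b)=\bigvee_{v\in U_Y(u)}\varphi^*_\alpha(v)\otimes\delta^A(a,v,b)$. The only detail you leave implicit, which the paper states, is that gathering the $\varphi^*_\alpha$-factors requires commuting them past the values $\delta^A(a,v,b)\in\{0,e\}$, which is harmless precisely because those values are crisp.
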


\begin{proof}
According to (\ref{languageinit}), we have
\[
L({\cal A}_\alpha)(u)=\bigvee_{a\in A_\alpha}\delta^{A_\alpha}(a_0, u,
a)\otimes\tau^{A_\alpha}(a).
\]
Thus, for the empty word
$\varepsilon\in X^*$, by Theorem \ref{lema3}, we have
\[
\begin{aligned}
L({\cal
A}_{\alpha})(\varepsilon)&=\tau^{A_\alpha}(a_0)=\bigvee_{v\in
Y^*}\bigvee_{b\in A}\varphi_\alpha^*(v)\otimes\delta^A(a_0, v,
b)\otimes\tau^A(b)=\bigvee_{v\in Y^*}\varphi_\alpha^*(v)\otimes\bigl(\bigvee_{b\in
A}\delta^A(a_0, v, b)\otimes\tau^A(b)\bigr)\\
&=\bigvee_{v\in
U_Y(\varepsilon)}\varphi_\alpha^*(v)\otimes\|\alpha_R\|(v)=\|\alpha\|(\varepsilon).
\end{aligned}
\]

Suppose that $\delta^{A_\alpha}(a,u,b)=\bigvee\{\,\varphi^*_\alpha(v)\otimes\delta^A(a,v,b)\mid
{v\in U_Y(u)}\,\}$, for some $u\in X^*$
and all $a,b\in A_\alpha.$ Then for any $x\in X$ we have
\[
\begin{aligned}
\delta^{A_\alpha}(a,ux,b) &=\bigvee_{c\in
A_\alpha}\delta^{A_\alpha}(a,u,c)\otimes\delta^{A_\alpha}(c,x,b)=\bigvee_{c\in A}\bigl(\bigvee_{v\in
U_Y(u)}\varphi^*_\alpha(v)\otimes\delta^A(a,v,c)\bigr)\otimes\bigl(\bigvee_{w\in
U_Y(x)}\varphi^*_\alpha(w)\otimes\delta^A(c,w,b)\bigr)\\
&=\bigvee_{c\in A}\bigvee_{v\in U_Y(u),\atop w\in
U_Y(x)}\varphi^*_\alpha(v)\otimes\delta^A(a,v,c)\otimes\varphi^*_\alpha(w)\otimes\delta^A(c,w,b)=\bigvee_{v\in U_Y(u),\atop w\in
U_Y(x)}\varphi^*_\alpha(vw)\otimes\bigl(\bigvee_{c\in
A}\delta^A(a,v,c)\otimes\delta^A(c,w,b)\bigr)\\
&=\bigvee_{v\in U_Y(u),\atop w\in
U_Y(x)}\varphi^*_\alpha(vw)\otimes\delta^A(a,vw,b)=\bigvee_{v\in
U_Y(ux)}\varphi^*_\alpha(v)\otimes\delta^A(a,v,b).
\end{aligned}
\]
Observe that the above equalities follow by Lemmas \ref{lema1} and \ref{lema2}, and equation (\ref{eq:sh3}).~We have also used the equality $\delta^A(a,v,c)\otimes\varphi^*_\alpha(w)=\varphi^*_\alpha(w)\otimes\delta^A(a,v,c)$,
which follows by the fact that $\delta^A(a,v,c)\in \{0,1\}$.

Consequently, for any $u\in X^+$, due to Theorem \ref{lema3},
(\ref{eq:fuzzyrec.1}) and (\ref{eq:fuzzyrec.2}), we have
\[
\begin{aligned}
L({\cal A}_\alpha)(u)&=\bigvee_{a\in
A_\alpha}\delta^{A_\alpha}(a_0,u,a)\otimes\tau^{A_\alpha}(a)=\bigvee_{a\in A}\big(\bigvee_{v\in
U_Y(u)}\varphi^*_\alpha(v)\otimes\delta^A(a_0,v,a)\big)\otimes\big(\bigvee_{w\in
Y^*}\bigvee_{b\in A}\varphi_\alpha^*(w)\otimes\delta^A(a, w,
b)\otimes\tau^A(b)\big)\\
&=\bigvee_{v\in U_Y(u)\atop w\in
Y^*}\varphi_\alpha^*(vw)\otimes\big(\bigvee_{a, b\in
A}\delta^A(a_0,v,a)\otimes\delta^A(a, w,
b)\otimes\tau^A(b)\big)=\bigvee_{v\in U_Y(u)}\varphi_\alpha^*(v)\otimes\bigl(\bigvee_{a\in
A}\delta^A(a_0,v,b)\otimes\tau^A(b)\bigr)\\
&=\bigvee_{v\in
U_Y(u)}\varphi^*_\alpha(v)\otimes\|\alpha_R\|(v)=\|\alpha\|(u).
\end{aligned}
\]
This completes the proof of the theorem.
\end{proof}

The fuzzy automaton ${\cal A}_\alpha=(A_\alpha, X, \delta^{A_\alpha}, a_0, \tau^{A_\alpha})$ will be called the {\it fuzzy automaton associated
with $\cal A$ and $\alpha$\/}.

\section{Fuzzy automata from fuzzy regular expressions: Effective construction}\label{sec:effect.con}

Let $\cal L$ be an integral $\ell-$monoid, and let $\alpha$ be an
arbitrary  fuzzy regular expression. Theorem
\ref{th:1} allows us to construct different types of  fuzzyautomata from $\alpha$, i.e., different  fuzzy automata recognizing the fuzzy language $\|\alpha\|$.~Namely, in the
general case, by choosing different nondeterministic automata $\cal A$ constructed
  from $\alpha_R$, we obtain different  fuzzy
finite automata ${\cal A}_\alpha$ of $\alpha$.

Let us recall that there are many well-known constructions of
small nondeterministic automata from a given regular
expression. The most famous are those of Thompson \cite{T.68},
Glushkov \cite{G.61} and McNaughton-Yamada \cite{MY.60}.~The last
one is known as the {\it position automaton\/}. In addition, Antimirov in
\cite{A.96}  constructed~the {\it partial derivative automaton\/}, which
generalizes Brzozowski's {\it derivative automaton\/} \cite{B.64}.~However, in spite of improve\-ments made by Brzozowski and
Antimirov, the position automaton is the most often~used, probably
because of its simplicity and the fact that other constructions
did not make any practical improvements.~The latest
nondeterministic automaton constructed from a regular exporession is the
{\it follow automaton\/}, introduced by Ilie et al.~\cite{IY.02a,IY.02,IY.03a}.~It
has been proved that the follow automaton is the quotient of the
position auto\-maton, and therefore it is smaller than the
position automaton.

Let $\cal L$ be an integral $\ell-$monoid,~and let $\alpha$ be an
arbitrary  fuzzy regular expression over an alphabet~$X$. Consider a
regular expression $\alpha_R$ over $X\cup Y$, where $Y$ is an
alphabet associated with $\alpha$.~Starting from~the position automaton
${\cal A}_{\rm p}(\alpha_R)=(A_{\rm p}, X\cup Y,
\delta^{A_{\rm p}}, 0, \tau^{A_{\rm p}})$ of $\alpha_R$, by means of (\ref{eq:fuzzyrec.1}) and (\ref{eq:fuzzyrec.2})
we construct the~fuzzy automaton associated with~${\cal A}_{\rm p}(\alpha_R)$~and $\alpha$, which will be denoted by ${\cal
A}_{\rm pf}(\alpha)=(A_{\rm pf}, X, \delta^{A_{\rm pf}}, 0,
\tau^{A_{\rm pf}})$.~The~computing of ${\cal A}_{\rm
pf}(\alpha)$ for a given  fuzzy regular expression is described in
Theorem \ref{th:1}, and Examples \ref{pr6} and \ref{pr7} clarify this construction.

\begin{example}\rm\label{pr6}
Let $\cal L$ be the G\" odel structure.~Consider
$\alpha=0.2((0.1(xy)^*)^*+y),$ a fuzzy regular expression over the
alphabet $\{x,y\}$ from Example \ref{pr4}.~Here
$\alpha_R=\lambda((\mu(xy)^*)^*+y)$ is a regular expression over
the alphabet $\{x,y,\lambda, \mu\},$ obtained from $\alpha$. The
marked version of the expression $\alpha_R$ is
$\overline{\alpha_R}=\lambda_1((\mu_2(x_3y_4)^*)^*+y_5),$ and
$\varphi_{\alpha}$ is given by
\[\varphi_{\alpha}=\begin{pmatrix}
x & y & \lambda & \mu \\
1 & 1 & 0.2 & 0.1
\end{pmatrix}.
\]
The picture bellow represents the graph of the position
automaton ${\cal A}_{\rm p}(\alpha_R)$:
\begin{center}
\psset{unit=1.2cm}
\newpsobject{showgrid}{psgrid}{subgriddiv=1,griddots=10,gridlabels=6pt}
\begin{pspicture}(0,-2)(6,2)

\pnode(0,0){I} \SpecialCoor
\rput([angle=0,nodesep=8mm,offset=0pt]I ){\cnode{3mm}{A0}}
\rput([angle=0,nodesep=15mm,offset=0pt]A0){\cnode{3mm}{A1}}
\rput([angle=0,nodesep=15mm,offset=0pt]A0){\cnode{2.5mm}{A1}}
\rput([angle=0,nodesep=15mm,offset=0pt]A1){\cnode{3mm}{A2}}
\rput([angle=0,nodesep=15mm,offset=0pt]A1){\cnode{2.5mm}{A2}}
\rput([angle=0,nodesep=15mm,offset=0pt]A2){\cnode{3mm}{A4}}
\rput([angle=0,nodesep=15mm,offset=0pt]A2){\cnode{2.5mm}{A4}}
\rput([angle=-90,nodesep=15mm,offset=0pt]A2){\cnode{3mm}{A3}}
\rput([angle=90,nodesep=15mm,offset=0pt]A1){\cnode{3mm}{A5}}
\rput([angle=90,nodesep=15mm,offset=0pt]A1){\cnode{2.5mm}{A5}}
\rput(A0){\footnotesize $0$} \rput(A1){\footnotesize $1$}
\rput(A2){\footnotesize $2$} \rput(A3){\footnotesize $3$}
\rput(A4){\footnotesize $4$} \rput(A5){\footnotesize $5$}
\NormalCoor \ncline{->}{I}{A0}
\ncline{->}{A0}{A1}\Aput[1pt]{\footnotesize $\lambda$}
\ncline{->}{A1}{A5}\Aput[1pt]{\footnotesize $y$}
\ncline{->}{A1}{A2}\Aput[1pt]{\footnotesize $\mu$}
\ncline{->}{A2}{A3}\Aput[1pt]{\footnotesize $x$}
\ncline{->}{A4}{A2}\Bput[1pt]{\footnotesize $\mu$}

\ncarc[arcangle=15]{->}{A3}{A4}\aput[0pt](.5){\footnotesize $y$}
\ncarc[arcangle=15]{->}{A4}{A3}\aput[0pt](.4){\footnotesize $x$}

\nccircle[angleA=0]{<-}{A2}{0.3}\Bput[1pt]{\footnotesize $\mu$}

\end{pspicture}\\
Figure 1. The automaton  ${\cal A}_{\rm p}(\alpha_R)$
\end{center}
\medskip

Let us observe that
\[
\delta^{A_{\rm pf}}(i,x,j)=
\begin{cases}
\ \displaystyle\bigvee_{u\in {\cal M}(i,x,j)}\varphi^*_\alpha(u) & \text{if}\
{\cal P}(i,x,j)\not=\emptyset\\
\quad\ \ \,0 & \text{otherwise}
\end{cases},
\]
for all $x\in X$, $i,j\in A_{\rm p}$ (in notation from the proof of Theorem \ref{th:1}).

Let us, for example, describe how to determine $\delta^{A_{\rm
pf}}(0,x,3).$ For each word $u\in {\cal M}(0,x,3)$ there is a path
in the graph of ${\cal A}_{\rm p}(\alpha_R),$ which starts in $0$
and ends in $3$, with a single edge marked with $x$ and with all
other edges marked with symbol $\lambda$ or $\mu$ (see Figure 1.).
Obviously, ${\cal M}(0,x,3)=\{\lambda\mu x\}.$ Now,
\[
\delta^{A_{\rm pf}}(0,x,3)=\varphi^*_\alpha(\lambda\mu
x)=0.2\otimes0.1\otimes 1=0.1
\]
Further, ${\cal M}(1,x,3)=\{\mu x\},$ ${\cal M}(2,x,3)=\{x\},$
${\cal M}(4,x,3)=\{x\}$ and ${\cal M}(i,x,j)=\emptyset$ in all
other cases, and we have
\[
\begin{aligned}
\delta^{A_{\rm pf}}(1,x,3)&=0.1,\ \ \delta^{A_{\rm pf}}(2,x,3)=1,\\
\delta^{A_{\rm pf}}(4,x,3)&=1, \mbox{\ and\ \ }\delta^{A_{\rm
pf}}(i,x,j)=0,
\end{aligned}
\]
for $(i,j)\not\in \{(0,3),(1,3),(2,3),(4,3)\}.$

From Figure 1 one can see that ${\cal M}(0,y,5)=\{\lambda y\}$,
${\cal M}(1,y,5)=\{ y\}$, ${\cal M}(3,y,2)=\{ y\mu\}$ and ${\cal
M}(3,y,4)=\{ y\}$, whereas ${\cal M}(i,y,j)=\emptyset$ in all other
cases. Therefore, we have
\[
\delta^{A_{\rm pf}}(0,y,5)=0.2,\ \ \delta^{A_{\rm pf}}(1,y,5)=1,\ \
\delta^{A_{\rm pf}}(3,y,2)=0.1,\ \ \delta^{A_{\rm pf}}(3,y,4)=1,\ \
\delta^{A_{\rm pf}}(i,y,j)=0,
\]
for $(i,j)\not\in\{(0,5),(1,5)(3,2),(3,4)\}.$

To summarize,  fuzzy transition relations $\delta^{A_{\rm pf}}_x$,
$\delta^{A_{\rm pf}}_y$, and the  fuzzy set $\tau^{A_{\rm pf}}$ of
final states of the  fuzzy automaton ${\cal A}_{\rm
pf}(\alpha)$ are:
\[\delta^{A_{\rm
pf}}_x=\begin{bmatrix}
                   0 & 0 & 0 & 0.1 & 0 & 0\\
                   0 & 0 & 0 & 0.1 & 0 & 0\\
                   0 & 0 & 0 & 1 & 0 & 0\\
                   0 & 0 & 0 & 0 & 0 & 0\\
                   0 & 0 & 0 & 1 & 0 & 0\\
                   0 & 0 & 0 & 0 & 0 & 0
                   \end{bmatrix},\qquad
\delta^{A_{\rm pf}}_y=\begin{bmatrix}
                   0 & 0 & 0 & 0 & 0 & 0.2\\
                   0 & 0 & 0 & 0 & 0 & 1\\
                   0 & 0 & 0 & 0 & 0 & 0\\
                   0 & 0 & 0.1 & 0 & 1 & 0\\
                   0 & 0 & 0 & 0 & 0 & 0\\
                   0 & 0 & 0 & 0 & 0 & 0
                   \end{bmatrix},\qquad
\tau^{A_{\rm pf}}=\begin{bmatrix}
                   0.2 \\
                   1 \\
                   1 \\
                   0 \\
                   1 \\
                   1
                   \end{bmatrix},
\]and the graph of ${\cal A}_{\rm pf}(\alpha)$ is presented by Figure
2.
\begin{center}
\psset{unit=1.2cm}
\newpsobject{showgrid}{psgrid}{subgriddiv=1,griddots=10,gridlabels=6pt}
\begin{pspicture}(0,-1.8)(6,2)

\pnode(0,0){I} \SpecialCoor
\rput([angle=0,nodesep=8mm,offset=0pt]I ){\cnode{3mm}{A0}}
\rput([angle=0,nodesep=15mm,offset=0pt]A0){\cnode{3mm}{A1}}
\rput([angle=0,nodesep=15mm,offset=0pt]A1){\cnode{3mm}{A3}}
\rput([angle=45,nodesep=22.5mm,offset=0pt]A3){\cnode{3mm}{A2}}
\rput([angle=-90,nodesep=30mm,offset=0pt]A2){\cnode{3mm}{A4}}
\rput([angle=90,nodesep=15mm,offset=0pt]A1){\cnode{3mm}{A5}}
\rput(A0){\footnotesize $0$} \rput(A1){\footnotesize $1$}
\rput(A2){\footnotesize $2$} \rput(A3){\footnotesize $3$}
\rput(A4){\footnotesize $4$} \rput(A5){\footnotesize $5$}
\NormalCoor \ncline{->}{I}{A0}
\ncline{->}{A0}{A5}\Aput[1pt]{\footnotesize $y/0.2$}
\ncline{->}{A1}{A5}\Bput[1pt]{\footnotesize $y/1$}
\ncline{->}{A1}{A3}\Aput[1pt]{\footnotesize $x/0.1$}
\ncarc[arcangle=-35]{->}{A0}{A3}\Bput[1pt]{\footnotesize $x/0.1$}
\ncarc[arcangle=15]{->}{A2}{A3}\Aput[1pt]{\footnotesize $x/1$}
\ncarc[arcangle=15]{->}{A3}{A2}\Aput[1pt]{\footnotesize $y/0.1$}
\ncarc[arcangle=15]{->}{A3}{A4}\aput[0pt](.5){\footnotesize $y/1$}
\ncarc[arcangle=15]{->}{A4}{A3}\aput[0pt](.4){\footnotesize $x/1$}
\end{pspicture}\\
Figure 2. ${\cal A}_{\rm pf}(\alpha)$
\end{center}
\end{example}

It is important to note that the computing of the transition
relation of the  fuzzy automaton ${\cal A}_{\rm
pf}(\alpha)$, for a given  regular expression, might be a problem.~Namely, in the general case, for given $i,j\in A_{\rm pf}$ and $x\in
X$ the set ${\cal P}(i,x,j)$ of all words $u\in U_Y(x)$ such that $\delta^{A_{\rm p}}(i,u,j)=1$ is infinite, and hence, the computing of the~set ${\cal M}(i,x,j)$ of all minimal words of ${\cal
P}(i,x,j)$ with respect to $\le_{em}$ might be a difficult task.~In the sequel we consider this problem.

The next lemma is the well-known result which, for instance, was proved in \cite{DeB.DeM.03} for fuzzy relations with membership values in the real
unit interval and the composition defined by means of a $t$-norm.~In the
same way it can be proved for fuzzy relations over an integral $\ell-$monoid.

\begin{lemma}\label{lema4} Let $\cal L$ be an integral $\ell-$monoid and
let $R$ be a  fuzzy relation on a finite set $A$ with
$|A|=n$. Then
\begin{equation}\label{eq:closure}
\bigvee_{k=1}^n R^k
\end{equation}
is the least transitive  fuzzy relation on $A$ which contains $R.$

In particular, if $R$ is reflexive, then the least
 transitive  fuzzy relation on $A$ containing $R$ is equal to $R^n$.
\end{lemma}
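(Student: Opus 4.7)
The plan is to verify that $T := \bigvee_{k=1}^n R^k$ satisfies the three defining properties of the least transitive fuzzy relation containing $R$: it contains $R$, it is transitive, and it lies below every transitive fuzzy relation containing $R$. The first is immediate, since $R = R^1$ occurs in the join. Minimality is also routine: if $S$ is any transitive fuzzy relation with $R \le S$, then condition (T) is equivalent to $S \circ S \le S$, and iterating with the isotonicity of composition yields $R^k \le S^k \le S$ for all $k \ge 1$, whence $T \le S$.

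The principal work, and the step I expect to be the main obstacle, is establishing transitivity of $T$. Distributing composition over the join, one has $T \circ T = \bigvee_{m=2}^{2n} R^m$, so it is enough to show $R^m \le T$ for every $m$. My plan is a fuzzy pigeonhole argument by induction on $m$: for $m \le n$ the inequality is trivial, and for $m > n$, I expand $R^m(a,b)$ via (\ref{eq:komp}) as a supremum over sequences $a = c_0, c_1, \dots, c_m = b$ of products $R(c_0,c_1) \otimes \cdots \otimes R(c_{m-1},c_m)$. Since $m + 1 > n = |A|$, every such sequence revisits a state, say $c_i = c_j$ with $i < j$. The crucial observation is that the ``cycle'' factor $R(c_i, c_{i+1}) \otimes \cdots \otimes R(c_{j-1}, c_j)$ is bounded above by $1 = e$ by integrality of $\cal L$, so (\ref{isotonost}) implies the whole product is dominated by the product along the shortened sequence $c_0, \dots, c_i, c_{j+1}, \dots, c_m$, a contributor to $R^{m-(j-i)}(a,b)$ with $1 \le m - (j-i) < m$. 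The inductive hypothesis then gives $R^{m-(j-i)} \le T$, so every summand of $R^m(a,b)$ is $\le T(a,b)$, hence $R^m \le T$. Integrality is indispensable at this point: without $1 = e$ a cycle factor could exceed $e$ and removing it could \emph{decrease} the value of the product, breaking the estimate.

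For the second assertion, the plan is to observe that reflexivity $R(a,a) = e$ forces $R^k \le R^{k+1}$ for every $k$, since $R^{k+1}(a,b) \ge R^k(a,b) \otimes R(b,b) = R^k(a,b) \otimes e = R^k(a,b)$. Consequently the chain $R \le R^2 \le \cdots \le R^n$ is increasing and the supremum defining $T$ collapses to its top term $R^n$.
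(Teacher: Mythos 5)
Your proof is correct, but note that the paper itself does not prove this lemma at all: it only remarks that the result is well known, citing De Baets and De Meyer \cite{DeB.DeM.03} for the $t$-norm case on $[0,1]$ and asserting that the same argument works over an integral $\ell$-monoid. So there is no in-paper argument to match; what you supply is exactly the standard transitive-closure proof that the paper leaves implicit. All three parts are sound: $R\le T$ is trivial; minimality follows since transitivity of $S$ on a finite $A$ is equivalent to $S\circ S\le S$, whence $R^k\le S^k\le S$ by isotonicity of $\circ$; and distributivity (L4) over the finite suprema in (\ref{eq:komp}) does give $T\circ T=\bigvee_{m=2}^{2n}R^m$, reducing transitivity to $R^m\le T$ for all $m$, which your pigeonhole induction delivers, with integrality ($1=e$) used precisely where you say it is. The second assertion (reflexive case) is also handled correctly via $R^k\le R^{k+1}$.

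One detail you should tighten in the pigeonhole step: you need the repetition $c_i=c_j$ to satisfy $(i,j)\neq(0,m)$, since otherwise $m-(j-i)=0$ and the shortened ``path'' contributes to $R^0$, which is not below $T$ in general because $R$ is not assumed reflexive in the first statement. This is easily repaired: when $m>n$ the $m$ states $c_0,\dots,c_{m-1}$ already contain a repetition, or equivalently choose $i<j$ with $j-i$ minimal, which forces $j-i\le n<m$ and hence $1\le m-(j-i)<m$ as you claim. With that one-line justification added, the argument is complete.
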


Let $\cal L$ be an integral $\ell-$monoid, and let $\alpha$ be an
arbitrary  fuzzy regular expression over an alphabet $X$. For a
regular expression $\alpha_R$ over $X\cup Y$, where $Y$ is an
alphabet associated with $\alpha,$ and let  ${\cal A}=(A, X\cup Y,
\delta^A, a_0, \tau^A)$ be an arbitrary nondeterministic automaton which recognizes the language $\|\alpha_R\|$.~Let us define
a reflexive  fuzzy relation $R$ on $A$ as follows
\begin{equation}\label{eq:closure}
R(a,b)=\begin{cases}
       \ \qquad\quad1 & \text{if}\ a=b\\
       \ \displaystyle\bigvee_{\lambda '\in Y}\lambda\otimes\delta^A(a,\lambda ',b) & \text{otherwise}
       \end{cases},
\end{equation}
and let us denote by $R_{\cal A}$ the least transitive relation containing $R$.~By Lemma \ref{lema4} we obtain that $R_{\cal
A}=R^n$, where $n$ is the number of states of $\cal A$.
Now, we can prove the following:

\begin{theorem}\label{lemanova1}
Let $\cal L$ be an integral $\ell-$monoid and let $\alpha$ be an
arbitrary  fuzzy regular expression.~For an arbitrary~non\-deterministic automaton
${\cal A}=(A, X\cup Y,\delta^A, a_0, \tau^A)$ recognizing the language $\|\alpha_R\|$
we have
\begin{equation}\label{eq:pom}
R_{\cal A}(a,b)=\bigvee_{u\in
Y^*}\varphi^*_\alpha(u)\otimes\delta^{A}(a,u,b),
\end{equation}for all $a,b\in A.$
\end{theorem}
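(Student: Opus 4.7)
The plan is to apply Lemma \ref{lema4} to identify $R_{\cal A}$ with $R^n$ (where $n=|A|$, since $R$ is reflexive), and then to compute $R^n$ by unfolding the composition. A preliminary step is to rewrite $R$ uniformly as
\begin{equation*}
R(a,b)=\bigvee_{u\in \{\varepsilon\}\cup Y}\varphi^*_\alpha(u)\otimes\delta^A(a,u,b)
\end{equation*}
for all $a,b\in A$. Indeed, when $a=b$ the term for $u=\varepsilon$ already contributes $\varphi^*_\alpha(\varepsilon)\otimes\delta^A(a,\varepsilon,a)=1$ and every other term is bounded by $1$ (using integrality), whereas for $a\ne b$ the $u=\varepsilon$ term contributes $0$ and the remaining terms give exactly the defining formula of $R$ (using $\varphi_\alpha(\lambda')=\lambda$).

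Next I would prove by induction on $k\in \mathbb{N}$ that
\begin{equation*}
R^k(a,b)=\bigvee_{u\in Y^{\le k}}\varphi^*_\alpha(u)\otimes\delta^A(a,u,b),
\end{equation*}
where $Y^{\le k}$ denotes words in $Y^*$ of length at most $k$. The inductive step $R^{k+1}=R^k\circ R$ uses, in order: the finite distributive law (L4) to push the $\bigvee_c$ inside (all joins here are finite since $Y$ and $A$ are finite); the commutativity $\delta^A(a,u,c)\otimes\varphi^*_\alpha(v)=\varphi^*_\alpha(v)\otimes\delta^A(a,u,c)$, valid because $\delta^A$ takes values in $\{0,1\}$, as exploited in the proof of Theorem \ref{th:1}; the homomorphism property $\varphi^*_\alpha(uv)=\varphi^*_\alpha(u)\otimes\varphi^*_\alpha(v)$; and the extension formula (\ref{extension}) for $\delta^A$. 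The key combinatorial observation is that as $u$ ranges over $Y^{\le k}$ and $v$ over $\{\varepsilon\}\cup Y$, the concatenation $uv$ exhausts $Y^{\le k+1}$: if $|w|\le k$, take $v=\varepsilon$; otherwise split off the last letter.

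With this identity in hand, Lemma \ref{lema4} yields $R_{\cal A}=R^n$, and hence
\begin{equation*}
R_{\cal A}(a,b)=\bigvee_{u\in Y^{\le n}}\varphi^*_\alpha(u)\otimes\delta^A(a,u,b)\le \bigvee_{u\in Y^*}\varphi^*_\alpha(u)\otimes\delta^A(a,u,b),
\end{equation*}
giving one inequality; the existence of the supremum on the right follows from Lemma \ref{lema2} applied to the restriction of $\varphi^*_\alpha$ to $Y^*$ and to $\gamma(u)=\delta^A(a,u,b)\in \{0,1\}$. For the reverse inequality, since $R\le R_{\cal A}$ and $R_{\cal A}$ is transitive, an easy induction gives $R^k\le R_{\cal A}$ for every $k\ge 1$, so every term $\varphi^*_\alpha(u)\otimes\delta^A(a,u,b)$ with $u\in Y^*$ is bounded by $R^{\max(1,|u|)}(a,b)\le R_{\cal A}(a,b)$, and taking the supremum over $u$ closes the argument. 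The main obstacle I expect is the bookkeeping in the inductive computation of $R^k$: correctly rearranging nested suprema, invoking the $\{0,1\}$-commutativity trick, and verifying that the decompositions $uv$ really exhaust $Y^{\le k+1}$; once the identity for $R^k$ is secured, the rest is essentially formal.
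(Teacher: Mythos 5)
Your proposal is correct and follows essentially the same route as the paper: both identify $R_{\cal A}$ with $R^n$ via Lemma \ref{lema4}, expand powers of $R$ into words over $Y$ using distributivity, the $\{0,1\}$-commutativity trick and the extension formula (\ref{extension}), and prove the reverse inequality by bounding each term $\varphi^*_\alpha(u)\otimes\delta^A(a,u,b)$ by $R^{|u|}(a,b)\le R_{\cal A}(a,b)$. The only difference is cosmetic: you establish the exact identity $R^k(a,b)=\bigvee_{u\in Y^{\le k}}\varphi^*_\alpha(u)\otimes\delta^A(a,u,b)$ by induction after rewriting $R$ uniformly, whereas the paper obtains the forward inequality by expanding $R^n$ over chains of states and collapsing repeated consecutive states.
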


\begin{proof}
First we note that the existence of the supremum on the right side of
(\ref{eq:pom}) follows by Lemma \ref{lema2}.

Let $n$ be the number of states of $\cal A$.~If $a=b$ then
\[
R_{\cal
A}(a,a)=1=\varphi^*_\alpha(\varepsilon)=\varphi^*_\alpha(\varepsilon)\otimes\delta^{A}(a,\varepsilon,a)\le\bigvee_{u\in
Y^*}\varphi^*_\alpha(u)\otimes\delta^{A}(a,u,a),
\]
and hence, (\ref{eq:pom}) holds.~Otherwise, we have the following
\[
\begin{aligned}
R_{\cal A}(a,b)&=R^n(a,b)=\bigvee_{a_1,\dots,a_{n-1}\in A}R(a,a_1)\otimes
R(a_1,a_2)\otimes\cdots\otimes R(a_{n-1},b)\\
&=\bigvee_{a_1,\dots,a_{k-1}\in A,\atop k\le n,\, a_i\not=
a_{i+1}}R(a,a_1)\otimes
R(a_1,a_2)\otimes\cdots\otimes R(a_{k-1},b)\\
&=\bigvee_{a_1,\dots,a_{k-1}\in A,\atop k\le n,\, a_i\not=
a_{i+1}}\bigvee_{\lambda_1 ',\dots,\lambda_{k} '\in
Y}\lambda_1\otimes\cdots\otimes\lambda_{k}\otimes\delta^A(a,\lambda_1',a_1)\otimes\cdots\otimes\delta^A(a_{k-1},\lambda_{k} ',b)\\
&=\bigvee_{\lambda_1 ',\dots,\lambda_{k} '\in
Y}\lambda_1\otimes\cdots\otimes\lambda_{k}\otimes\biggl(\bigvee_{a_1,\dots,a_{k-1}\in A,\atop k\le n,\, a_i\not=
a_{i+1}}\delta^A(a,\lambda_1',a_1)\otimes\cdots\otimes\delta^A(a_{k-1},\lambda_{k} ',b)\biggr)\\
&\le\bigvee_{\lambda_1 ',\dots,\lambda_{k} '\in
Y}\lambda_1\otimes\cdots\otimes\lambda_{k}\otimes\delta^A(a,\lambda_1 '\cdots\lambda_{k} ',b)=\bigvee_{\lambda_1 ',\dots,\lambda_{k} '\in
Y}\varphi^*_\alpha(\lambda_1'\cdots\lambda_{k}')\otimes\delta^A(a,\lambda_1 '\cdots\lambda_{k} ',b)\\
&\le\bigvee_{u\in Y^*}\varphi^*_\alpha(u)\otimes\delta^{A}(a,u,b).
\end{aligned}
\]
On the other hand, consider an arbitrary $u=\lambda_1'\cdots \lambda_k'\in
Y^*$, with $\lambda_1',\ldots ,\lambda_k'\in Y$, $k\in \mathbb N$. Then
\[
\begin{aligned}
\varphi_\alpha^*(u)\otimes \delta^A(a,u,b)&=\lambda_1\otimes \cdots \lambda_k\otimes
\delta^A(a,\lambda_1'\cdots \lambda_k',b)\\
&=\lambda_1\otimes \cdots \lambda_k\otimes
\bigvee_{a_1,\ldots, a_{k-1}\in A}\delta^A(a,\lambda_1',a_1)\otimes \cdots\otimes \delta^A(a_{k-1},\lambda_k',b) \\
&\le
\bigvee_{a_1,\ldots, a_{k-1}\in A}\lambda_1\otimes \cdots \lambda_k\otimes\delta^A(a,\lambda_1',a_1)\otimes \cdots\otimes \delta^A(a_{k-1},\lambda_k',b) \\
&\le \bigvee_{a_1,\ldots, a_{k-1}\in A}R(a,a_1)\otimes \cdots \otimes R(a_{k-1},b)=R^k(a,b)\le
R_{\cal A}(a,b),
\end{aligned}
\]
whence it follows that
\[
\bigvee_{u\in Y^*}\varphi_\alpha^*(u)\otimes \delta^A(a,u,b)\le R_{\cal A}(a,b).
\]
Therefore, (\ref{eq:pom}) holds.
\end{proof}

By the previous theorem we can conclude that $R_{\cal A}$ is the
transitive closure of the adjacency matrix of the weighted graph
obtained from the graph of the automaton $\cal A$ by removing all
the edges marked by the symbols from  the alphabet $X$, in which
the weight of the edge marked by $\lambda '\in Y$ equals
$\varphi_\alpha^*(\lambda ')$.

Next we prove the following.

\begin{theorem}\label{th:2}
Let $\cal L$ be an integral $\ell-$monoid and let $\alpha$ be an
arbitrary  fuzzy regular expression.~For an arbitrary~non\-deterministic automaton ${\cal A}=(A, X\cup Y, \delta^A, a_0, \tau^A)$ recognizing the language $\|\alpha_R\|$ and the fuzzy automaton ${\cal
A}_\alpha$ associated with $\cal A$ and $\alpha$ we have
\begin{equation}\label{eq:comp}
\delta^{A_{\alpha}}_x=R_{\cal A}\circ\delta^{A}_x\circ R_{\cal
A},\text{\ \ and\ \ }\tau^{A_{\alpha}}=R_{\cal A}\circ\tau^{A},
\end{equation}
for every $x\in X$.
\end{theorem}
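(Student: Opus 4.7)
The plan is to reduce both identities to Theorem~\ref{lemanova1}, which already expresses $R_{\cal A}$ as the supremum $\bigvee_{u\in Y^*}\varphi_\alpha^*(u)\otimes\delta^A(a,u,b)$. The key structural fact I will exploit is that $U_Y(x)=Y^*xY^*$ (equation~(\ref{eq:sh4})), so that every $v\in U_Y(x)$ factors uniquely as $v=uxw$ with $u,w\in Y^*$, and similarly every $v\in Y^*$ is its own $Y^*$-factorization. A second key ingredient is that $\varphi_\alpha(x)=e$ for $x\in X$, so the homomorphism $\varphi_\alpha^*$ satisfies $\varphi_\alpha^*(uxw)=\varphi_\alpha^*(u)\otimes\varphi_\alpha^*(w)$; this is what lets the ``middle letter'' $x$ pass through cleanly.

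For the second identity, I will expand
\[
(R_{\cal A}\circ\tau^A)(a)=\bigvee_{b\in A} R_{\cal A}(a,b)\otimes\tau^A(b),
\]
substitute the formula from Theorem~\ref{lemanova1}, and pull the supremum over $u\in Y^*$ outside using distributivity, obtaining $\bigvee_{u\in Y^*}\varphi_\alpha^*(u)\otimes\bigvee_{b\in A}\delta^A(a,u,b)\otimes\tau^A(b)$, which is exactly $\tau^{A_\alpha}(a)$ by definition~(\ref{eq:fuzzyrec.2}). This is essentially a bookkeeping exercise once existence of the supremum is guaranteed by Lemma~\ref{lema2}.

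For the first identity, I would expand
\[
(R_{\cal A}\circ\delta^A_x\circ R_{\cal A})(a,b)=\bigvee_{c,d\in A}R_{\cal A}(a,c)\otimes\delta^A(c,x,d)\otimes R_{\cal A}(d,b),
\]
insert the Theorem~\ref{lemanova1} formula for both occurrences of $R_{\cal A}$, and then interchange the finite sum over $c,d\in A$ with the two suprema over $u,w\in Y^*$. Inside, I will use that $\delta^A(c,x,d)\in\{0,1\}$ commutes with scalars of $\cal L$, together with the isotonicity and distributivity of $\otimes$, to regroup the expression as
\[
\bigvee_{u,w\in Y^*}\bigl(\varphi_\alpha^*(u)\otimes\varphi_\alpha^*(w)\bigr)\otimes\bigvee_{c,d\in A}\delta^A(a,u,c)\otimes\delta^A(c,x,d)\otimes\delta^A(d,w,b).
\]
By the extension rule~(\ref{extension}) applied twice, the inner double supremum is $\delta^A(a,uxw,b)$, and by $\varphi_\alpha(x)=e$ the prefactor equals $\varphi_\alpha^*(uxw)$. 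Summing over $u,w\in Y^*$ is exactly summing over $v\in Y^*xY^*=U_Y(x)$, producing $\delta^{A_\alpha}(a,x,b)$ by definition~(\ref{eq:fuzzyrec.1}).

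The main obstacle is purely bookkeeping: justifying that every supremum in the manipulation exists and that we may freely interchange it with the binary operation $\otimes$ and with finite joins over $A$. This is where Lemmas~\ref{lema1} and~\ref{lema2} do the work, exactly as in the proofs of Theorems~\ref{lema3} and~\ref{th:1}; the existence of $\bigvee_{u\in Y^*}\varphi_\alpha^*(u)\otimes\delta^A(a,u,b)$ is already provided by Theorem~\ref{lemanova1}, and the existence of $\bigvee_{u,w}\varphi_\alpha^*(uxw)\otimes\delta^A(a,uxw,b)$ follows from the same lemma applied with $U=Y^*xY^*$ and $\gamma(v)=\delta^A(a,v,b)$. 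Once these existences are in hand, every step is a direct application of the infinite distributive laws available in the integral $\ell$-monoid $\cal L$ together with the definitions of composition in~(\ref{eq:komp}) and~(\ref{eq:komp1}).
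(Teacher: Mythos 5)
Your proposal is correct and follows essentially the same route as the paper's proof: the paper runs the identical chain of equalities (starting from $\delta^{A_\alpha}(a,x,b)$ and using $U_Y(x)=Y^*xY^*$, the extension rule (\ref{extension}), regrouping by distributivity with existence guaranteed by Lemmas \ref{lema1} and \ref{lema2}, and Theorem \ref{lemanova1} to recognize $R_{\cal A}$), merely traversed in the opposite direction, and it likewise dispatches $\tau^{A_\alpha}=R_{\cal A}\circ\tau^A$ directly from (\ref{eq:fuzzyrec.2}) and Theorem \ref{lemanova1}.
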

\begin{proof}
By (\ref{eq:fuzzyrec.1}), (\ref{eq:sh4}) and (\ref{eq:pom}), we have
\[
\begin{aligned}
\delta^{A_{\alpha}}(a,x,b)&=\bigvee_{u\in
U_Y(x)}\varphi^*_\alpha(u)\otimes\delta^{A}(a,u,b)=\bigvee_{u,v\in
Y^*}\varphi^*_\alpha(u)\otimes\delta^{A}(a,uxv,b)\otimes\varphi^*_\alpha(v)\\
&=\bigvee_{u,v\in Y^*}\bigvee_{c,d\in A}\varphi^*_\alpha(u)\otimes\delta^{A}(a,u,c)\otimes\delta^{A}(c,x,d)\otimes\delta^{A}(d,v,b)\otimes\varphi^*_\alpha(v)\\
&=\bigvee_{c,d\in A}\bigl(\bigvee_{u\in Y^*}\varphi^*_\alpha(u)\otimes\delta^{A}(a,u,c)\bigr)\otimes\delta^{A}(c,x,d)\otimes\bigl(\bigvee_{v\in Y^*}\varphi^*_\alpha(v)\otimes\delta^{A}(d,v,b)\bigr)\\
&=\bigvee_{c,d\in A}R_{\cal A}(a,c)\otimes\delta^{A}(c,x,d)\otimes
R_{\cal A}(d,b),
\end{aligned}
\]
for all $a,b\in A$.~Let us note that the existence of the above suprema follows
by Lemmas \ref{lema1} and \ref{lema2}.

The rest of the proof follows immediately
from (\ref{eq:fuzzyrec.2}) and Theorem \ref{lemanova1}.
\end{proof}

The previous theorem gives an efficient method for computing the
 fuzzy automaton corresponding~to a given  fuzzy
regular expression $\alpha.$ Namely, for $\alpha$ and a
nondeterministic automaton  $\cal A$ recognizing the language  $\|\alpha_R\|$,  the fuzzy
transition relations of ${\cal A}_{\alpha}$ are just matrix
products of $R_{\cal A}$ and the related  fuzzy transition~rela\-tion of
${\cal A}$ (cf.~Example \ref{pr7}).

\begin{example}\rm\label{pr7}
Consider $\alpha=0.2((0.1(xy)^*)^*+y),$ the  fuzzy regular
expression from Example \ref{pr6}. It is easy to verify, using
Figure 2, that  fuzzy relations $R$ and $R_{{\cal A}_{\rm p}}$ are
those given by matrices\[
R=\begin{bmatrix}
                   1 & 0.2 & 0 & 0 & 0 & 0\\
                   0 & 1 & 0.1 & 0 & 0 & 0\\
                   0 & 0 & 1 & 0 & 0 & 0\\
                   0 & 0 & 0 & 1 & 0 & 0\\
                   0 & 0 & 0.1 & 0 & 1 & 0\\
                   0 & 0 & 0 & 0 & 0 & 1
                   \end{bmatrix},
\qquad R_{{\cal A}_{\rm p}}=\begin{bmatrix}
                   1 & 0.2 & 0.1 & 0 & 0 & 0\\
                   0 & 1 & 0.1 & 0 & 0 & 0\\
                   0 & 0 & 1 & 0 & 0 & 0\\
                   0 & 0 & 0 & 1 & 0 & 0\\
                   0 & 0 & 0.1 & 0 & 1 & 0\\
                   0 & 0 & 0 & 0 & 0 & 1
                   \end{bmatrix}.
\]Now, by Theorem \ref{th:2}, we compute  $\delta^{A_{\rm pf}}_x$, $\delta^{A_{\rm
pf}}_y$ and $\tau^{A_{\rm pf}}$ as follows:
\[
\begin{aligned}
\delta^{A_{\rm pf}}_x&=R_{{\cal A}_{\rm p}}\circ\delta^{A_{\rm
p}}_x\circ R_{{\cal A}_{\rm p}}=
                   \begin{bmatrix}
                   0 & 0 & 0 & 0.1 & 0 & 0\\
                   0 & 0 & 0 & 0.1 & 0 & 0\\
                   0 & 0 & 0 & 1 & 0 & 0\\
                   0 & 0 & 0 & 0 & 0 & 0\\
                   0 & 0 & 0 & 1 & 0 & 0\\
                   0 & 0 & 0 & 0 & 0 & 0
                   \end{bmatrix},\\
\delta^{A_{\rm pf}}_y&=R_{{\cal A}_{\rm p}}\circ\delta^{A_{\rm
p}}_y\circ R_{{\cal A}_{\rm p}}=
                   \begin{bmatrix}
                   0 & 0 & 0 & 0 & 0 & 0.2\\
                   0 & 0 & 0 & 0 & 0 & 1\\
                   0 & 0 & 0 & 0 & 0 & 0\\
                   0 & 0 & 0.1 & 0 & 1 & 0\\
                   0 & 0 & 0 & 0 & 0 & 0\\
                   0 & 0 & 0 & 0 & 0 & 0
                   \end{bmatrix},\ \ \
\tau^{A_{\rm pf}}&=R_{{\cal A}_{\rm p}}\circ\tau^{A_{\rm p}}=
                   \begin{bmatrix}
                   0.2\\
                   1\\
                   1\\
                   0\\
                   1\\
                   1
                   \end{bmatrix}.
\end{aligned}
\]
\end{example}

\section{Fuzzy automata from fuzzy regular expressions: Reduced construction}\label{sec:reduced}

Let $\cal L$ be an integral $\ell-$monoid, and let $\alpha$ be an
arbitrary  fuzzy regular expression over an alphabet $X$. For a
regular expression $\alpha_R$ over $X\cup Y$, where $Y$ is an
alphabet associated with $\alpha,$ let  ${\cal A}=(A, X\cup Y,
\delta^{A}, a_0, \tau^{A})$ be a nondeterministic automaton recognizing the
language
$\|\alpha_R\|$.~Besides, let ${\cal A}_\alpha=(A_\alpha, X,
\delta^{A_\alpha}, a_0, \tau^{A_\alpha})$ be the fuzzy automaton associated with $\cal A$ and $\alpha$. Set
\[
A^{\rm r}_\alpha=\{a_0\}\cup\{a\in A_\alpha\ |\ (\exists b\in
A_\alpha)(\exists x\in X)\ \delta^A(b,x,a)=1\}.
\]
Let us denote by ${\cal
A}^{\rm r}_\alpha=(A^{\rm r}_\alpha, X, \delta^{A^{\rm r}_\alpha},
a_0, \tau^{A^{\rm r}_\alpha})$ a fuzzy automaton defined by
\begin{equation}\label{eq:reduced}
\delta^{A^{\rm r}_\alpha}_x(a,b)=(R_{\cal
A}\circ\delta^{A}_x)(a,b),\quad\tau^{A^{\rm r}_\alpha}(a)=(R_{\cal
A}\circ\tau^{A})(a).
\end{equation}
for all $a,b\in A^{\rm r}_\alpha,$ and $x\in X.$ The fuzzy
automaton ${\cal A}^{\rm r}_\alpha$ is called the {\it
reduced  fuzzy automaton\/} associated with $\cal A$ and
$\alpha.$

\begin{theorem}\label{th:3}
Let $\cal L$ be an integral $\ell-$monoid, let $\alpha$ be an
arbitrary  fuzzy regular expression, let ${\cal A}_\alpha$ be an
arbitrary nondeterministic automaton recognizing the language $\|\alpha_R\|$, and let
${\cal A}^{\rm r}_\alpha$ be the reduced fuzzy automaton defined as in
{\rm (\ref{eq:reduced})}.~Then
\begin{equation}
L({\cal A}^{\rm r}_\alpha)=\|\alpha\|.
\end{equation}
\end{theorem}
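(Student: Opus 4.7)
The plan is to show $L({\cal A}^{\rm r}_\alpha)(u) = L({\cal A}_\alpha)(u)$ for every $u \in X^*$ and then invoke Theorem~\ref{th:1}. First I would rewrite $L({\cal A}_\alpha)(u)$ using Theorem~\ref{th:2}. For $u = x_1 x_2 \cdots x_n \in X^*$, substituting $\delta^{A_\alpha}_{x_i} = R_{\cal A} \circ \delta^A_{x_i} \circ R_{\cal A}$ and $\tau^{A_\alpha} = R_{\cal A} \circ \tau^A$ into the usual product expression for $L({\cal A}_\alpha)(u)$, and using that $R_{\cal A}$ is reflexive and transitive so that $R_{\cal A} \circ R_{\cal A} = R_{\cal A}$, I can collapse every pair of adjacent $R_{\cal A}$-factors to obtain
\[
L({\cal A}_\alpha)(u) = f_{a_0} \circ R_{\cal A} \circ \delta^A_{x_1} \circ R_{\cal A} \circ \delta^A_{x_2} \circ \cdots \circ R_{\cal A} \circ \delta^A_{x_n} \circ R_{\cal A} \circ \tau^A,
\]
where $f_{a_0}$ denotes the characteristic fuzzy subset of $\{a_0\}$ in $A$.

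Next I would prove by induction on $n$ that $L({\cal A}^{\rm r}_\alpha)(u)$ equals the same expression. The key structural observation, following directly from the definition of $A^{\rm r}_\alpha$, is that for every $x \in X$ and every $a \in A \setminus A^{\rm r}_\alpha$ we have $\delta^A(b, x, a) = 0$ for all $b \in A$; indeed, such an $a$ is distinct from $a_0$ and is not the target of any $X$-transition. Consequently, for any fuzzy subset $g$ of $A$, the fuzzy subsets $g \circ \delta^A_x$ and $g \circ R_{\cal A} \circ \delta^A_x$ are both supported in $A^{\rm r}_\alpha$. Setting $h_0$ to be the characteristic fuzzy subset of $\{a_0\}$ in $A^{\rm r}_\alpha$ and $h_i$ to be the composition of $h_{i-1}$ with $\delta^{A^{\rm r}_\alpha}_{x_i}$ computed inside $A^{\rm r}_\alpha$, one proves inductively that $h_i$, extended by zero outside $A^{\rm r}_\alpha$, coincides on all of $A$ with the prefix $f_{a_0} \circ R_{\cal A} \circ \delta^A_{x_1} \circ R_{\cal A} \circ \cdots \circ R_{\cal A} \circ \delta^A_{x_i}$. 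The inductive step works precisely because $h_{i-1}$ already vanishes outside $A^{\rm r}_\alpha$, so the sum over $b \in A^{\rm r}_\alpha$ in the definition of $h_i$ extends harmlessly to a sum over all $b \in A$.

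Finally, $L({\cal A}^{\rm r}_\alpha)(u)$ is the composition of $h_n$ with $\tau^{A^{\rm r}_\alpha} = R_{\cal A} \circ \tau^A$ over $A^{\rm r}_\alpha$; since $h_n$ vanishes outside $A^{\rm r}_\alpha$, this agrees with the composition $h_n \circ R_{\cal A} \circ \tau^A$ taken over all of $A$, which matches the displayed formula for $L({\cal A}_\alpha)(u)$. The empty word case is immediate, as $L({\cal A}^{\rm r}_\alpha)(\varepsilon) = \tau^{A^{\rm r}_\alpha}(a_0) = (R_{\cal A} \circ \tau^A)(a_0) = \tau^{A_\alpha}(a_0) = L({\cal A}_\alpha)(\varepsilon)$. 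The main point, which is bookkeeping more than a genuine obstacle, is handling the asymmetry of the reduced construction: although ${\cal A}^{\rm r}_\alpha$ uses only one copy of $R_{\cal A}$ per transition, the support property of $\delta^A_x$ for $x \in X$ together with the idempotence of $R_{\cal A}$ conspire to recover the same behavior as ${\cal A}_\alpha$.
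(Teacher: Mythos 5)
Your proposal is correct and follows essentially the same route as the paper: both rely on Theorem \ref{th:2} to write $\delta^{A_\alpha}_{x}=R_{\cal A}\circ\delta^{A}_{x}\circ R_{\cal A}$ and $\tau^{A_\alpha}=R_{\cal A}\circ\tau^{A}$, collapse adjacent factors by the idempotency of $R_{\cal A}$, and then use the observation that $(R_{\cal A}\circ\delta^{A}_x)(a,b)=0$ for $b\notin A^{\rm r}_\alpha$ to restrict the intermediate states to $A^{\rm r}_\alpha$ (your prefix induction with the $h_i$ is just a more explicit bookkeeping of the paper's single restriction of the multi-index supremum), before concluding via Theorem \ref{th:1}.
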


\begin{proof}
First, we have that
\[
L({\cal A}^{\rm r}_\alpha)(\varepsilon)=\tau^{A^{\rm
r}_\alpha}(a_0)=\tau^{A_\alpha}(a_0)=\|\alpha\|(\varepsilon).
\]
Next, for every $u\in X^+$, where $u=x_1x_2\cdots x_n$, with $x_1,x_2,\dots,x_n\in X$, by (\ref{rellanguageinit}), Theorems
\ref{th:1} and \ref{th:2}, and idempotency of $R_{\cal A}$
we obtain that
\[
\begin{aligned}
\|\alpha\|(u)&=L({\cal A}_\alpha)(u)=(\delta^{A_\alpha}_{x_1}\circ\cdots\circ\delta^{A_\alpha}_{x_n}\circ\tau^{A_\alpha})(a_0)=(R_{\cal A}\circ\delta^{A}_{x_1}\circ R_{\cal
A}^2\circ\cdots\circ R_{\cal A}^2\circ\delta^{A}_{x_n}\circ
R_{\cal A}^2\circ\tau^{A})(a_0)\\
&=(R_{\cal A}\circ\delta^{A}_{x_1}\circ R_{\cal A}\circ\cdots\circ
R_{\cal A}\circ\delta^{A}_{x_n}\circ
R_{\cal A}\circ\tau^{A})(a_0)\\
&=\bigvee_{a_1,\dots,a_n\in A_\alpha}(R_{\cal
A}\circ\delta^{A}_{x_1})(a_0,a_1)\otimes\cdots\otimes(R_{\cal
A}\circ\delta^{A}_{x_n})(a_{n-1},a_n)\otimes(R_{\cal
A}\circ\tau^{A})(a_n)\\
&=^*(\delta^{A^{\rm r}_\alpha}_{x_1}\circ\cdots\circ\delta^{A^{\rm r}_\alpha}_{x_n}\circ\tau^{A^{\rm r}_\alpha})(a_0)\\
&=L({\cal A}^{\rm r}_\alpha).
\end{aligned}
\]The equality marked by $*$ follows from the fact that
$(R_{\cal A}\circ\delta^{A}_x)(a,b)=0$, for all $b\in
A_\alpha\setminus A_\alpha^{\rm r},$ and  $x\in X.$
\end{proof}

Obviously, Theorem \ref{th:3} describes a method of construction
a fuzzy automaton from a given  fuzzy~regular expression, which can be significantly smaller than the one made
by the basic construction.~Furthermore,~if the starting nondeterministic
automaton recognizing $\|\alpha_R\|$ is the position automaton
${\cal A}_{\rm p}(\alpha_R)$, then ${\cal A}^{\rm r}_{\rm
pf}(\alpha)$~has exactly $|\alpha|_X+1$ states (Example \ref{pr8}
illustrates this fact).~Accordingly, since the position automaton of a
given regular expression has the number of states equal to the
length of the considered regular expression, the fuzzy automaton
${\cal A}^{\rm r}_{\rm pf}(\alpha)$ is called the {\it position
 fuzzy automaton\/} of the given  fuzzy regular
expression $\alpha$.

\begin{example}\rm\label{pr8}
Consider a fuzzy regular expression $\alpha=(0.1x^*)(yx+0.8y)^*$
from Example \ref{pr5}.~Fuzzy transition relations
$\delta^{A_{\rm pf}}_x,$ $\delta^{A_{\rm pf}}_y$ and the  fuzzy
set $\tau^{A_{\rm pf}}$ of terminal states of the  fuzzy automaton ${\cal A}_{\rm pf}(\alpha)$ are:
\[
\delta^{A_{\rm pf}}_x=\begin{bmatrix}
                   0 & 0 & 0.1 & 0 & 0 & 0.08 & 0\\
                   0 & 0 & 1 & 0 & 0 & 0.8 & 0\\
                   0 & 0 & 1 & 0 & 0 & 0.8 & 0\\
                   0 & 0 & 0 & 0 & 1 & 0.8 & 0\\
                   0 & 0 & 0 & 0 & 0 & 0 & 0\\
                   0 & 0 & 0 & 0 & 0 & 0 & 0\\
                   0 & 0 & 0 & 0 & 0 & 0 & 0
                   \end{bmatrix},\quad
\delta^{A_{\rm pf}}_y=\begin{bmatrix}
                   0 & 0 & 0 & 0.1 & 0 & 0.064 & 0.08\\
                   0 & 0 & 0 & 1 & 0 & 0.64 & 0.8\\
                   0 & 0 & 0 & 1 & 0 & 0.64 & 0.8\\
                   0 & 0 & 0 & 0 & 0 & 0 & 0\\
                   0 & 0 & 0 & 1 & 0 & 0.64 & 0.8\\
                   0 & 0 & 0 & 0 & 0 & 0.8 & 1\\
                   0 & 0 & 0 & 1 & 0 & 0.64 & 0.8
                   \end{bmatrix},\quad
\tau^{A_{\rm pf}}=\begin{bmatrix}
                   0.1 \\
                   1 \\
                   1 \\
                   0 \\
                   1 \\
                   0 \\
                   1
                 \end{bmatrix}.
\]Evidently, $A^{\rm r}_{\rm pf}=\{0,2,3,4,6\}$, and hence, the  fuzzy finite
automaton ${\cal A}^{\rm r}_{\rm pf}$ has two states less than
the position  fuzzy automaton ${\cal A}_{\rm pf}(\alpha)$.

Fuzzy
transition relations $\delta^{A^{\rm r}_{\rm pf}}_x,$
$\delta^{A^{\rm r}_{\rm pf}}_y$, and the  fuzzy set $\tau^{A^{\rm
r}_{\rm pf}}$ of terminal states~of~${\cal A}^{\rm r}_{\rm
pf}(\alpha)$ are:
\[
\delta^{A^{\rm r}_{\rm pf}}_x=\begin{bmatrix}
                   0 & 0.1 & 0 & 0 & 0\\
                   0 & 1 & 0 & 0 & 0\\
                   0 & 0 & 0 & 1 & 0\\
                   0 & 0 & 0 & 0 & 0\\
                   0 & 0 & 0 & 0 & 0
                   \end{bmatrix},\qquad
\delta^{A^{\rm r}_{\rm pf}}_y=\begin{bmatrix}
                   0 & 0 & 0.1 & 0 & 0.08\\
                   0 & 0 & 1 & 0 & 0.8\\
                   0 & 0 & 0 & 0 & 0\\
                   0 & 0 & 1 & 0 & 0.8\\
                   0 & 0 & 1 & 0 & 0.8
                   \end{bmatrix},\qquad
\tau^{A^{\rm r}_{\rm pf}}=\begin{bmatrix}
                   0.1 \\
                   1 \\
                   0 \\
                   1 \\
                   1
                 \end{bmatrix}.
\]
\end{example}

\section{Reducing the size of position fuzzy automata by right invariant crisp equivalences}\label{sec:redstate}

The reduction of the number of states of fuzzy automata with membership values in complete
residuated lattices has been recently investigated in \cite{CSIP.07,CSIP.10,SCI.11}, where the state reduction problem has been related~to the problem of solving particular systems of fuzzy relation equations and inequalities.~Central place in~the state reduction is held by right and left invariant fuzzy equivalences, as well as by right and left invariant fuzzy quasi-orders.

Complete residuated lattices have a rich algebraic structure that provides powerful tools for solving fuzzy relational equations and inequalities, including those that define right and left invariant fuzzy~equivalences.~Unfortunately, when we deal with fuzzy automata over lattice ordered monoids we do not have~such tools, and we are forced to work with crisp equivalences.~Therefore, here we reduce the number of states of fuzzy automata over $\ell $-monoids using right invariant crisp equivalences.

The reduction of fuzzy automata over complete residuated lattices
by means of right invariant fuzzy equivalences and fuzzy
quasi-orders has been recently studied in
\cite{CSIP.07,CSIP.10,SCI.11}. It has been proved that better
state reductions can be achieved employing  right invariant fuzzy
equivalences.~Residuated lattices are~rich algebraic structures
supplied with operations called residuum and biresiduum, and
satisfying many other important algebraic properties. In some sources
residuated lattices are called integral, commutative, residuated
$\ell-$monoids. There, the operations of residuum and biresiduum
play a very important role, and are used for modelling  right
invariant fuzzy equivalences and fuzzy quasi-orders. In this
paper, however, we deal with $\ell-$monoids, in which, due to the
lack of algebraic properties and operations, construction of fuzzy
equivalence relations is a problem. Therefore, here we investigate
the problem of the reduction of  fuzzy automata by  right
invariant crisp equivalences only.~Note that the state reduction of
fuzzy automata by means of crisp equivalences has been already studied in
\cite{BG.02,CM.04,MMS.99,MM.02,P.06}, but in very special cases, and the
algorithms provided there are based on computing and merging indistinguishable
states.

Let ${\cal A}=(A,X,\delta^A,\sigma^A,\tau^A)$ be a fuzzy automaton over an $\ell $-monoid $\cal L$, and let $E$ be a fuzzy equivalence on its set of states $A$.~If $E$ is a solution to the system
\begin{equation}
\begin{aligned}\label{eq:RIFE}
&E\circ \delta^A_x\le \delta^A_x\circ E, \qquad x\in X,\\
&E\circ \tau^A=\tau^A ,
\end{aligned}
\end{equation}
then it is called a {\it right invariant\/}  fuzzy
equivalence on $\cal A$.~Dually we define {\it left invariant\/} fuzzy equivalences. A crisp equivalence on $A$ which is a
solution to (\ref{eq:RIFE}) is called a {\it right invariant crisp
equivalence\/} on $\cal A$. Note that ordinary crisp equivalences
on $A$ are considered here as  fuzzy equivalences on $A$ taking
membership values in the set $\{0,e\}\subseteq L.$

It has been shown in \cite{CSIP.10} that right invariant fuzzy
equivalences are immediate generaliza\-tions of right invariant
equivalences on nondeterministic automata, studied in a series of papers
by Ilie, Yu \cite{IY.02a,IY.02,IY.03a,IY.03,INY.04,ISY.05}, as well as in
\cite{CSY.05,CC.03,CC.04},~or
well-behaved equivalences, studied by Calude et al.~\cite{CCK.00}.
It has been also proved in \cite{CSIP.10} that congruences on
fuzzy automata, studied by Petkovi\'c in \cite{P.06}, are just
right invariant crisp equivalences on  fuzzy automata, in the
terminology from this paper.~Note that right invariant fuzzy equivalences have been called
in \cite{CIDB.11,CIJD.11} {\it forward bisimulation\/} fuzzy equivalences, whereas left invariant ones were called~{\it backward bisimulation\/} fuzzy equivalences.

In the same way as in \cite{CSIP.10} we can show that the inequality $E\circ \delta^A_x\le \delta^A_x\circ E$ is equivalent to the equation $E\circ \delta^A_x\circ E= \delta^A_x\circ E$, for each $x\in X$. We can also prove that if $E\circ \delta^A_x\le \delta^A_x\circ E$ or $E\circ \delta^A_x\circ E= \delta^A_x\circ E$ holds for every letter $x\in X$, then it also holds if we replace~the letter $x$ by an arbitrary word $u\in X^*$.

In the sequel we provide an algorithm for computing the greatest right
invariant crisp equivalence~on a fuzzy automaton with membership values in an integral $\ell-$monoid.~Such algorithm has been first given in \cite{P.06}, for~fuzzy automata over the G\"odel structure, and later in \cite{CSIP.07,CSIP.10},
for fuzzy automata over a complete residuated lattice.~The proof of the next
theorem is the same as the proof of the corresponding theorem for fuzzy automata over a complete residuated lattice, so it will be omitted.

\begin{theorem} {\bf \cite{CSIP.10,P.06}}\label{th:6}
Let $\cal L$ be an integral $\ell-$monoid, let ${\cal
A}=(A,X,\delta^A,\sigma^A,\tau^A)$ be a fuzzy finite automaton over $\cal L$. Define inductively a sequence $\{E_k\}_{k\in\Bbb N}$ of crisp
equivalences on $A$ as follows:
\begin{align}
&E_1(a,b)=\begin{cases}\ 1 & \text{if}\ \tau^A(a)=\tau^A(b)\\ \ 0 & \text{otherwise}
\end{cases},\qquad \text{for all}\ a,b\in A, \\
&E_{k+1}=E_k\land E_k^r, \qquad \text{for each}\ k\in \Bbb
N ,
\end{align}
where $E_k^r $ is a crisp equivalence on $A$ defined by
\begin{equation}
E_k^r(a,b)=\begin{cases}
\ 1\ &\ \text{if}\ \ (\delta_x\circ E_k)(a,c) = (\delta_x\circ E_k)(b,c),\ \ \text{for all}\ x\in X\ \text{and}\ c\in A \\
\ 0 &\ \text{otherwise}
\end{cases}, \qquad \text{for all}\ a,b\in A,
\end{equation}
Then the sequence $\{E_k\}_{k\in\Bbb N}$ is finite and descending, there is the least $k\in \mathbb N$ such that $E_k=E_{k+m}$, for each $m\in \mathbb
N$, and $E_k$ is the greatest right invariant crisp equivalence on the fuzzy
automaton $\cal A$.
\end{theorem}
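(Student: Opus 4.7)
The plan is to establish the theorem in five steps: (i) each $E_k$ is a crisp equivalence on $A$; (ii) the sequence $\{E_k\}_{k\in\mathbb N}$ is descending; (iii) it stabilizes; (iv) the stable term is right invariant; (v) every right invariant crisp equivalence is contained in the stable term. Steps (i) and (ii) are routine. Reflexivity, symmetry, and transitivity of $E_k^r$ are immediate from the defining equality of values, and intersection of crisp equivalences is a crisp equivalence, so each $E_k$ is one. The chain is descending because $E_{k+1}=E_k\wedge E_k^r\le E_k$ by construction. Step (iii) uses the hypothesis that $A$ is finite: there are only finitely many crisp equivalences on $A$, so any descending sequence must be eventually constant, and we take $k$ to be the least index with $E_k=E_{k+1}$.

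For step (iv), suppose $E_k=E_{k+1}=E_k\wedge E_k^r$, so $E_k\le E_k^r$. Unfolding the definition of $E_k^r$, whenever $E_k(a,b)=1$ we have $(\delta^A_x\circ E_k)(a,c)=(\delta^A_x\circ E_k)(b,c)$ for all $x\in X$ and $c\in A$. This is exactly the statement that $E_k\circ\delta^A_x\circ E_k=\delta^A_x\circ E_k$ for every $x\in X$, which, as noted in the paragraph preceding the theorem, is equivalent to $E_k\circ\delta^A_x\le\delta^A_x\circ E_k$. Moreover $E_k\le E_1$ means $\tau^A$ is constant on $E_k$-classes, so for any $a\in A$ we have $(E_k\circ\tau^A)(a)=\bigvee_{b\in A}E_k(a,b)\otimes\tau^A(b)=\tau^A(a)$, giving $E_k\circ\tau^A=\tau^A$. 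Hence $E_k$ is a right invariant crisp equivalence.

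Step (v), the maximality, is the nontrivial part and I would prove $F\le E_k$ for every right invariant crisp equivalence $F$ by induction on $k$. For the base case, if $F(a,b)=1$ then from $F\circ\tau^A=\tau^A$ we get $\tau^A(a)\ge F(a,b)\otimes\tau^A(b)=\tau^A(b)$, and by symmetry $\tau^A(a)=\tau^A(b)$, so $E_1(a,b)=1$ and $F\le E_1$. For the inductive step, assume $F\le E_k$; it suffices to show $F\le E_k^r$. Using right invariance $F\circ\delta^A_x\le\delta^A_x\circ F$ and $F\le E_k$, together with idempotency $E_k\circ E_k=E_k$ (since $E_k$ is a reflexive transitive crisp relation, hence $E_k^2=E_k$), one obtains
\[
F\circ\delta^A_x\circ E_k\ \le\ \delta^A_x\circ F\circ E_k\ \le\ \delta^A_x\circ E_k\circ E_k\ =\ \delta^A_x\circ E_k.
\]
On the other hand, if $F(a,b)=1$, then $(F\circ\delta^A_x\circ E_k)(a,c)\ge F(a,b)\otimes(\delta^A_x\circ E_k)(b,c)=(\delta^A_x\circ E_k)(b,c)$. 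Combining these two inequalities gives $(\delta^A_x\circ E_k)(b,c)\le(\delta^A_x\circ E_k)(a,c)$, and by symmetry of $F$ the reverse inequality also holds. Therefore $F(a,b)=1$ implies $E_k^r(a,b)=1$, so $F\le E_k^r$, and consequently $F\le E_{k+1}$. Passing to the stabilized index gives $F\le E_k$, which completes the proof of maximality.

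The main technical obstacle is step (v): the bookkeeping of compositions that allows one to bound $F\circ\delta^A_x\circ E_k$ both above and below by $\delta^A_x\circ E_k$. The decisive ingredient is that $E_k$ is a crisp reflexive transitive relation, so $E_k\circ E_k=E_k$ collapses the extra factor introduced by moving $F$ past $\delta^A_x$. Steps (i)--(iv) are largely bookkeeping, and the finiteness assumption on $A$ is used only to guarantee stabilization in step (iii).
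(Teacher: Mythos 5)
Your proof is correct and takes essentially the approach the paper intends: the paper omits the proof, referring to \cite{CSIP.10,P.06}, and the argument there is the same partition-refinement scheme you use (the descending chain on a finite state set stabilizes, the stable term satisfies $E\circ\delta^A_x\le\delta^A_x\circ E$ and $E\circ\tau^A=\tau^A$, and maximality follows by induction on $k$ showing every right invariant crisp equivalence lies below each $E_k$). All steps check out, including the key uses of $E_k\circ E_k=E_k$ and of integrality ($e=1$) in the maximality induction.
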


Now we prove the following.

\begin{proposition}\label{lemanova2}
Let $\cal L$ be an integral $\ell-$monoid and let $\alpha$ be an
arbitrary  fuzzy regular expression.~For an arbitrary~non\-deterministic
automaton ${\cal A}=(A, X\cup Y,
\delta^A, a_0,\tau^A)$ recognizing the language  $\|\alpha_R\|$, the fuzzy
automaton ${\cal A}_\alpha$  associated with $\cal A$ and $\alpha$, and an arbitrary right invariant crisp equivalence $E$ on
$\cal A$ we have that
\begin{itemize}\parskip=0pt
\item[{\rm (a)}] $E\circ R_{\cal A}\le R_{\cal A}\circ E$;
\item[{\rm (b)}] $R_{{\cal A}/E}(E_a,E_b)=(R_{\cal A}\circ
E)(a,b)$\ \ for all\ \ $a,b\in A$.
\end{itemize}
\end{proposition}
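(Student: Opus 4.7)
The plan is to reduce both parts to a single closed-form expression by invoking Theorem \ref{lemanova1}, which gives $R_{\cal A}(a,b)=\bigvee_{u\in Y^*}\varphi^*_\alpha(u)\otimes\delta^{A}(a,u,b)$. Before doing anything else, I would verify that right invariance lifts from letters to words in $Y^*$, that is, $E\circ\delta^A_u\le\delta^A_u\circ E$ and equivalently $E\circ\delta^A_u\circ E=\delta^A_u\circ E$ for every $u\in(X\cup Y)^*$. This follows by a routine induction on $|u|$ using associativity of composition, precisely as sketched in the paragraph preceding Theorem \ref{th:6}; the base case $u=\varepsilon$ is trivial, and the inductive step is $E\circ\delta^A_{vx}=E\circ\delta^A_v\circ\delta^A_x\le\delta^A_v\circ E\circ\delta^A_x\le\delta^A_v\circ\delta^A_x\circ E=\delta^A_{vx}\circ E$.

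For part (a), I would start from $(E\circ R_{\cal A})(a,b)=\bigvee_{c\in A}E(a,c)\otimes R_{\cal A}(c,b)$, substitute the Theorem \ref{lemanova1} expression for $R_{\cal A}(c,b)$, and interchange suprema to obtain $\bigvee_{u\in Y^*}\varphi^*_\alpha(u)\otimes(E\circ\delta^A_u)(a,b)$. The interchange is legitimate because, by Lemma \ref{lema2}, only finitely many minimal words $u\in Y^*$ with $\delta^A(a,u,b)=1$ actually contribute, so the hypotheses of Lemma \ref{lema1} are met. The word-level right invariance gives $E\circ\delta^A_u\le\delta^A_u\circ E$, and reassembling in reverse yields $(R_{\cal A}\circ E)(a,b)$.

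For part (b), I would first observe that Theorem \ref{lemanova1} applies equally well to the factor fuzzy automaton ${\cal A}/E$, since its proof never uses the crispness of transitions, yielding $R_{{\cal A}/E}(E_a,E_b)=\bigvee_{u\in Y^*}\varphi^*_\alpha(u)\otimes\delta^{A/E}(E_a,u,E_b)$. The key auxiliary fact is that $\delta^{A/E}(E_a,u,E_b)=(\delta^A_u\circ E)(a,b)$ for every $u\in(X\cup Y)^*$, which I would prove by induction on $|u|$. The base case $u=\varepsilon$ uses the crispness of $E$, since $E(a,b)=e$ iff $E_a=E_b$. The inductive step exploits that the supremum defining composition of $\delta^{A/E}$ can be rewritten as a supremum over all representatives $b\in A$ (each value depends only on $E_b$), whereupon the word-level identity $E\circ\delta^A_u\circ E=\delta^A_u\circ E$ collapses the interior $E$. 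Substituting this into the expression for $R_{{\cal A}/E}(E_a,E_b)$ and interchanging suprema once more gives $\bigvee_{c\in A}\bigl(\bigvee_{u\in Y^*}\varphi^*_\alpha(u)\otimes\delta^A(a,u,c)\bigr)\otimes E(c,b)=\bigvee_{c\in A}R_{\cal A}(a,c)\otimes E(c,b)=(R_{\cal A}\circ E)(a,b)$.

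The main obstacle will be the careful bookkeeping in part (b): justifying the application of Theorem \ref{lemanova1} to a fuzzy (not nondeterministic) factor automaton, and establishing the characterization of $\delta^{A/E}$ on words. All supremum interchanges must be backed by Lemmas \ref{lema1} and \ref{lema2}, which requires verifying that the relevant sets of minimal words are finite; this is inherited from the corresponding property for $\delta^A$ since, for each fixed $a,c\in A$, the set $\{u\in Y^*\mid \delta^A(a,u,c)=e\}$ has only finitely many minimal elements with respect to $\le_{em}$.
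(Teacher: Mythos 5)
Your proof is correct and takes essentially the same route as the paper's: part (a) is the paper's argument verbatim (expand $R_{\cal A}$ via Theorem \ref{lemanova1}, interchange suprema by Lemmas \ref{lema1} and \ref{lema2}, and use the word-level right invariance $E\circ\delta^A_u\le\delta^A_u\circ E$), and for part (b) the paper likewise applies Theorem \ref{lemanova1} to ${\cal A}/E$, identifies $\delta^{A/E}(E_a,u,E_b)$ with $(E\circ\delta^A_u\circ E)(a,b)=(\delta^A_u\circ E)(a,b)$, and rearranges the suprema to obtain $(R_{\cal A}\circ E)(a,b)$. One small inaccuracy: your justification that the proof of Theorem \ref{lemanova1} "never uses the crispness of transitions" is not quite right (it commutes scalars past transition values in a possibly noncommutative $\ell$-monoid, which relies on $\delta^A$ being $\{0,1\}$-valued), but the application is valid anyway because $E$ and $\delta^A$ are crisp, so ${\cal A}/E$ is itself a nondeterministic automaton.
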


\begin{proof}
(a) First, by Theorem \ref{lemanova1} and Lemmas \ref{lema1} and \ref{lema2}
we
have the following
\[
\begin{aligned}
(E\circ R_{\cal A})(a,b)&=\bigvee_{c\in A}\bigvee_{u\in
Y^*}E(a,c)\otimes\varphi^*_\alpha(u)\otimes\delta^A(c,u,b)=\bigvee_{u\in
Y^*}\varphi^*_\alpha(u)\otimes(E\circ\delta^A_u)(a,b)\\
&\le \bigvee_{u\in Y^*}\varphi^*_\alpha(u)\otimes(\delta^A_u\circ
E)(a,b)=\bigvee_{c\in A}\bigvee_{u\in
Y^*}\varphi^*_\alpha(u)\otimes\delta^A(a,u,c)\otimes
E(c,b)\\
&=(R_{\cal A}\circ E)(a,b),
\end{aligned}
\]for every $a,b\in A.$ Consequently,  $E\circ R_{\cal A}\le R_{\cal A}\circ E.$

(b) By Theorem \ref{lemanova1}, Lemmas \ref{lema1} and \ref{lema2},
and (\ref{eq:dE1}), we obtain
\[\begin{aligned}
R_{{\cal A}/E}(E_a,E_b)&=\bigvee_{u\in
Y^*}\varphi^*_\alpha(u)\otimes\delta^{A/E}(E_a,u,E_b)=\bigvee_{u\in
Y^*}\varphi^*_\alpha(u)\otimes(E\circ\delta_u^{A}\circ E)(a,b)\\
&=\bigvee_{u\in Y^*}\varphi^*_\alpha(u)\otimes(\delta_u^{A}\circ
E)(a,b)=\bigvee_{c\in A}\bigvee_{u\in
Y^*}\varphi^*_\alpha(u)\otimes\delta_u^{A}(a,c)\otimes E(c,b)\\
&=(R_{\cal A}\circ E)(a,b),
\end{aligned}\]for arbitrary $a,b\in A$.
\end{proof}

\begin{theorem}\label{th:7}
Let $\cal L$ be an integral $\ell-$monoid and let $\alpha$ be an
arbitrary  fuzzy regular expression.~Moreover, consider an arbitrary nondeterministic
automaton ${\cal A}=(A, X\cup Y,
\delta^A, a_0,\tau^A)$ recognizing the language $\|\alpha_R\|$, and the fuzzy
automaton ${\cal A}_\alpha$  associated with $\cal A$ and $\alpha$.

Then every right invariant equivalence $E$ on $\cal A$ is
a right invariant crisp equivalence on ${\cal A}_\alpha$, and the
fuzzy~auto\-maton $({\cal A}/E)_\alpha$ is isomorphic to the
factor  fuzzy automaton ${\cal A}_\alpha/E$.
\end{theorem}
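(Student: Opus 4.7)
The theorem has two parts: $(i)$ every right invariant crisp equivalence $E$ on $\cal A$ remains right invariant on ${\cal A}_\alpha$, and $(ii)$ the constructions $({\cal A}/E)_\alpha$ and ${\cal A}_\alpha/E$ produce the same fuzzy automaton.

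For $(i)$, I will rewrite the data of ${\cal A}_\alpha$ using Theorem \ref{th:2} as $\delta^{A_\alpha}_x=R_{\cal A}\circ\delta^A_x\circ R_{\cal A}$ and $\tau^{A_\alpha}=R_{\cal A}\circ\tau^A$, and then push $E$ rightward through each factor. Proposition \ref{lemanova2}(a) gives $E\circ R_{\cal A}\le R_{\cal A}\circ E$, the right invariance of $E$ on $\cal A$ gives $E\circ\delta^A_x\le\delta^A_x\circ E$, and iterating these three pushes in sequence yields $E\circ\delta^{A_\alpha}_x\le\delta^{A_\alpha}_x\circ E$. An analogous but shorter computation, combined with $E\circ\tau^A=\tau^A$ and reflexivity of $E$ (which supplies the reverse inequality), gives $E\circ\tau^{A_\alpha}=\tau^{A_\alpha}$.

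For $(ii)$, both fuzzy automata share the state set $A/E$, the initial state $E_{a_0}$, and the input alphabet $X$, so the candidate isomorphism is the identity map on $A/E$. It remains to verify that their transition fuzzy relations and terminal fuzzy sets agree. Since $E$ is right invariant on the nondeterministic automaton $\cal A$, the factor ${\cal A}/E$ is again a nondeterministic automaton recognizing $\|\alpha_R\|$, so Theorem \ref{th:2} applies and gives $\delta^{({\cal A}/E)_\alpha}_x=R_{{\cal A}/E}\circ\delta^{A/E}_x\circ R_{{\cal A}/E}$. Using Proposition \ref{lemanova2}(b) to replace $R_{{\cal A}/E}(E_a,E_c)$ by $(R_{\cal A}\circ E)(a,c)$, the definition $\delta^{A/E}_x=E\circ\delta^A_x\circ E$, and the fact that each resulting factor depends only on the equivalence class of its arguments, I will lift the join over $A/E$ to a join over $A$ (legitimate because $\lor$ is idempotent), obtaining $\delta^{({\cal A}/E)_\alpha}_x(E_a,E_b)=(R_{\cal A}\circ E\circ\delta^A_x\circ E\circ R_{\cal A}\circ E)(a,b)$ after collapsing $E\circ E=E$. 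The identities $E\circ R_{\cal A}\circ E=R_{\cal A}\circ E$ (from Proposition \ref{lemanova2}(a) together with reflexivity of both $R_{\cal A}$ and $E$) and $E\circ\delta^A_x\circ E=\delta^A_x\circ E$ (the equational form of right invariance on $\cal A$) then reduce this expression to $(R_{\cal A}\circ\delta^A_x\circ R_{\cal A}\circ E)(a,b)=(\delta^{A_\alpha}_x\circ E)(a,b)$. On the other side, part $(i)$ gives $\delta^{A_\alpha/E}_x(E_a,E_b)=(E\circ\delta^{A_\alpha}_x\circ E)(a,b)=(\delta^{A_\alpha}_x\circ E)(a,b)$, matching the previous expression. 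The terminal fuzzy sets are treated by the same pattern: $\tau^{({\cal A}/E)_\alpha}(E_a)$ simplifies via $\tau^{A/E}=E\circ\tau^A$ and $E\circ\tau^A=\tau^A$ to $(R_{\cal A}\circ\tau^A)(a)=\tau^{A_\alpha}(a)$, which agrees with $\tau^{A_\alpha/E}(E_a)=(E\circ\tau^{A_\alpha})(a)=\tau^{A_\alpha}(a)$ by $(i)$.

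The main obstacle is the bookkeeping in part $(ii)$: recasting the joins over equivalence classes of $A/E$ as joins over all of $A$, and then assembling the various invariance and idempotence identities into the correct order of rewrites. All the needed algebraic facts are already in hand from Theorem \ref{th:2} and Proposition \ref{lemanova2}; no new estimates are required.
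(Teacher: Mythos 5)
Your proposal is correct and follows essentially the same route as the paper: part (i) by expressing $\delta^{A_\alpha}_x=R_{\cal A}\circ\delta^A_x\circ R_{\cal A}$ and $\tau^{A_\alpha}=R_{\cal A}\circ\tau^A$ (Theorem \ref{th:2}) and pushing $E$ through using Proposition \ref{lemanova2}(a) and the right invariance of $E$ on $\cal A$, and part (ii) by applying Theorem \ref{th:2} to ${\cal A}/E$, Proposition \ref{lemanova2}(b), the equational form $E\circ\delta^A_x\circ E=\delta^A_x\circ E$, and $E\circ R_{\cal A}\circ E=R_{\cal A}\circ E$ to show both transition relations reduce to $\delta^{A_\alpha}_x\circ E$, with the identity on $A/E$ as the isomorphism. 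Your explicit remarks on lifting joins from $A/E$ to $A$ and on ${\cal A}/E$ recognizing $\|\alpha_R\|$ are details the paper leaves implicit, but they do not change the argument.
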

\begin{proof}
Let $E$ be an arbitrary right invariant  equivalence on $\cal A$.~First, by Theorem
\ref{th:2} and statement (a) of Proposition \ref{lemanova2}
we obtain
\[
E\circ\delta^{A_\alpha}_x=E\circ R_{\cal
A}\circ\delta^{A}_x\circ R_{\cal
A}\le R_{\cal A}\circ E\circ\delta^{A}_x\circ R_{\cal A} \le R_{\cal A}\circ\delta^{A}_x\circ E\circ R_{\cal
A}\le R_{\cal A}\circ \delta^{A}_x\circ R_{\cal A}\circ
E=\delta^{A_\alpha}_x\circ E,
\]
for every $x\in X$.~In a similar way, we show that $E\circ\tau^{A_\alpha}=\tau^{A_\alpha}$.~Consequently,  $E$ is a right invariant crisp~equi\-valence on ${\cal A}_\alpha$.

Next, by (\ref{eq:dE1}), Theorem \ref{th:2}, and statement (b) of Proposition
\ref{lemanova2} we have
\[
\begin{aligned}
\delta_x^{(A/E)_\alpha}(E_a,E_b)&=(R_{{\cal
A}/E}\circ\delta_x^{A/E}\circ R_{{\cal
A}/E})(E_a,E_b)=(R_{\cal A}\circ E\circ \delta_x^A\circ E\circ R_{\cal A}\circ E)(a,b)\\&=(R_{\cal A}\circ \delta_x^A\circ R_{\cal A}\circ E)(a,b)=(\delta_x^{A_\alpha}\circ
E)(a,b)=(E\circ\delta_x^{A_\alpha}\circ
E)(a,b)=\delta_x^{A_\alpha/E}(E_a,E_b),
\end{aligned}
\]for all $x\in X$ and  $a,b\in A$.~In addition, it is easy to check that
\[
\tau^{(A/E)_\alpha}(E_a)=\tau^{A_\alpha/E}(E_a)\\
\]for every $a\in A$.
Thus the identity function on ${A/E}$ is an isomorphism from
$({\cal A}/E)_\alpha$ to ${\cal A}_\alpha/E$.
\end{proof}

According to the previous theorem,
for an arbitrary
fuzzy regular expression $\alpha$, and any nondeterministic
automaton $\cal A$ recognizing $\|\alpha\|$, the greatest right invariant
equivalence on the nondeterministic automaton $\cal A$ is less or equal to the greatest right invariant crisp equivalence on the fuzzy automaton ${\cal A}_\alpha$.~The following example shows that  even if the starting automaton $\cal A$ is a minimal deterministic automaton of the language $\|\alpha_R\|$, the fuzzy automaton ${\cal
A}_\alpha$ may be further reduced by right invariant crisp
equivalences, i.e., the greatest right invariant crisp equivalence
on ${\cal A}_\alpha$~differs from the equality relation on
$A_\alpha.$

\begin{example}\rm\label{pr10}
Let $\cal L$ be G\" odel structure, and $\alpha=x+0.5x,$ a fuzzy
regular expression over the alphabet $\{x\}$. We have that
$\alpha_R=x+\lambda x$, and the graph of the minimal
deterministic automaton $\cal A$ recognizing the language  $\|\alpha_R\|$ is presented by
Figure 3a.\begin{center}
\psset{unit=1.2cm}
\newpsobject{showgrid}{psgrid}{subgriddiv=1,griddots=10,gridlabels=6pt}
\begin{pspicture}(-0.2,-2)(11,1)

\pnode(0.5,0){I} \SpecialCoor
\rput([angle=0,nodesep=8mm,offset=0pt]I ){\cnode{3mm}{A0}}
\rput([angle=-40,nodesep=19mm,offset=0pt]A0){\cnode{3mm}{A1}}
\rput([angle=0,nodesep=30mm,offset=0pt]A0){\cnode{3mm}{A2}}
\rput([angle=0,nodesep=30mm,offset=0pt]A0){\cnode{2.5mm}{A2}}

\rput(A0){\footnotesize $0$} \rput(A1){\footnotesize $1$}
\rput(A2){\footnotesize $2$} \NormalCoor \ncline{->}{I}{A0}
\ncline{->}{A0}{A1}\Bput[1pt]{\footnotesize $\lambda$}
\ncline{->}{A1}{A2}\Bput[1pt]{\footnotesize $x$}
\ncline{->}{A0}{A2}\Aput[1pt]{\footnotesize $x$}

\pnode(6,0){J} \SpecialCoor
\rput([angle=0,nodesep=8mm,offset=0pt]J ){\cnode{3mm}{A3}}
\rput([angle=-40,nodesep=19mm,offset=0pt]A3){\cnode{3mm}{A4}}
\rput([angle=0,nodesep=30mm,offset=0pt]A3){\cnode{3mm}{A5}}

\rput(A3){\footnotesize $0$} \rput(A4){\footnotesize $1$}
\rput(A5){\footnotesize $2$}
\ncline{->}{A3}{A5}\Aput[1pt]{\footnotesize $x/1$}
\ncline{->}{A4}{A5}\Bput[1pt]{\footnotesize $x/1$}
\NormalCoor\ncline{->}{J}{A3}

\end{pspicture}\\
{Figure 3a. The automaton ${\cal A}$\hspace{2cm}Figure 3b. The fuzzy automaton ${\cal A}_\alpha$}
\end{center}
\medskip

Figure 3b presents the fuzzy automaton $\cal A_\alpha$
associated with $\cal A$ and $\alpha$.~The fuzzy set $\tau^{A_{\alpha}}$
of terminal~states of $\cal A_\alpha$, and the greatest right
invariant crisp equivalence $E^{\rm cri}$ on $\cal A_\alpha$ are represented
by:
\[
\tau^{A_\alpha}=\begin{bmatrix}
                   0 \\
                   0 \\
                   1
                \end{bmatrix},\quad
E^{\rm cri}=\begin{bmatrix}
                   1 & 1 & 0\\
                   1 & 1 & 0\\
                   0 & 0 & 1
            \end{bmatrix}.
\]

\end{example}

\begin{theorem}\label{th:nova1}
Let $\cal L$ be an integral $\ell-$monoid and let $\alpha$ be an
arbitrary  fuzzy regular expression.~Moreover, consider an arbitrary nondeterministic
automaton ${\cal A}=(A, X\cup Y,
\delta^A, a_0,\tau^A)$ recognizing the language $\|\alpha_R\|$, and the fuzzy
automaton ${\cal A}_\alpha$  associated with $\cal A$ and $\alpha$.

Then for an arbitrary right invariant equivalence $E$ on
$\cal A$ there exists a right invariant crisp equivalence $E^{\rm
r}$ on~${\cal A}_\alpha^r$, such that the factor  fuzzy automata
$({\cal A}/E)^{\rm r}_\alpha$ and  $({\cal A}^{\rm
r}_\alpha)/E^{\rm r}$ are isomorphic.
\end{theorem}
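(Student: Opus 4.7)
The plan is to take $E^{\mathrm{r}}$ to be the restriction of $E$ to $A^{\mathrm{r}}_\alpha \subseteq A$; since $E$ is a crisp equivalence on $A$, its restriction is automatically a crisp equivalence on $A^{\mathrm{r}}_\alpha$. The proof then splits into three tasks: verify that $E^{\mathrm{r}}$ is right invariant on ${\cal A}^{\mathrm{r}}_\alpha$; show that the assignment $E_a \mapsto E^{\mathrm{r}}_a$ (for $a \in A^{\mathrm{r}}_\alpha$) is a well-defined bijection between the state sets of $({\cal A}/E)^{\mathrm{r}}_\alpha$ and ${\cal A}^{\mathrm{r}}_\alpha/E^{\mathrm{r}}$; and check that this bijection identifies the fuzzy transition relations and the fuzzy sets of terminal states.

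The technical device underlying the whole argument is the zero-extension observation that $(R_{\cal A}\circ\delta^A_x)(a,c)=0$ whenever $c\notin A^{\mathrm{r}}_\alpha$, which follows immediately from the fact that no $X$-transition of $\cal A$ enters such a state $c$. This lets me convert any supremum over $c\in A^{\mathrm{r}}_\alpha$ involving $R_{\cal A}\circ\delta^A_x$ into the corresponding supremum over $c\in A$ without change of value, and so trade compositions on the restricted state set for compositions on $A$ whenever required. Combining this with Proposition \ref{lemanova2}(a) (namely $E\circ R_{\cal A}\le R_{\cal A}\circ E$) and with the right invariance of $E$ on $\cal A$ (giving $E\circ\delta^A_x\le\delta^A_x\circ E$ and $E\circ\tau^A=\tau^A$), both $E^{\mathrm{r}}\circ\delta^{A^{\mathrm{r}}_\alpha}_x\le\delta^{A^{\mathrm{r}}_\alpha}_x\circ E^{\mathrm{r}}$ and $E^{\mathrm{r}}\circ\tau^{A^{\mathrm{r}}_\alpha}=\tau^{A^{\mathrm{r}}_\alpha}$ emerge as short chains of inclusions, establishing that $E^{\mathrm{r}}$ is right invariant on ${\cal A}^{\mathrm{r}}_\alpha$.

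For the state sets, unwinding $\delta^{A/E}(E_c,x,E_b)=(E\circ\delta^A_x\circ E)(c,b)$ shows that $(A/E)^{\mathrm{r}}_\alpha=\{E_a\mid a\in A^{\mathrm{r}}_\alpha\}$, and hence the map $E_a\mapsto E^{\mathrm{r}}_a$ is well defined and bijective onto $A^{\mathrm{r}}_\alpha/E^{\mathrm{r}}$. To verify that it preserves the fuzzy automaton structure, I will use Proposition \ref{lemanova2}(b) to rewrite $R_{{\cal A}/E}(E_a,E_c)$ as $(R_{\cal A}\circ E)(a,c)$, invoke the right-invariance identities $E\circ\delta^A_x\circ E=\delta^A_x\circ E$ and $E\circ R_{\cal A}\circ E=R_{\cal A}\circ E$ (the latter following from Proposition \ref{lemanova2}(a) together with the idempotency of $E$ and of $R_{\cal A}$), and apply the zero-extension principle. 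These manipulations collapse both $\delta_x^{(A/E)^{\mathrm{r}}_\alpha}(E_a,E_b)$ and $\delta_x^{A^{\mathrm{r}}_\alpha/E^{\mathrm{r}}}(E^{\mathrm{r}}_a,E^{\mathrm{r}}_b)$ to the common expression $(R_{\cal A}\circ\delta^A_x\circ E)(a,b)$, and a parallel computation identifies both terminal values with $(R_{\cal A}\circ\tau^A)(a)$. Preservation of the initial state is immediate since both automata retain $a_0$ as their single crisp initial state, and $E_{a_0}\mapsto E^{\mathrm{r}}_{a_0}$ under $\phi$.

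The principal obstacle I anticipate is bookkeeping rather than substance: the reduced automaton lives on a proper subset of $A$, so one must be careful to apply the zero-extension at exactly the right moments in each calculation in order to bring the right-invariance identities to bear. Once this device is set up cleanly, the argument is essentially a restriction of the computation used in the proof of Theorem \ref{th:7}, now decorated with an extra application of Proposition \ref{lemanova2}(a) to handle the $R_{\cal A}$ factor present in $\delta^{A^{\mathrm{r}}_\alpha}_x$ and $\tau^{A^{\mathrm{r}}_\alpha}$.
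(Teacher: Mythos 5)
Your proposal is correct and follows essentially the same route as the paper's proof: you take $E^{\mathrm{r}}$ to be the restriction of $E$ to $A^{\mathrm{r}}_\alpha$, prove right invariance via the zero-extension fact $(R_{\cal A}\circ\delta^A_x)(a,c)=0$ for $c\notin A^{\mathrm{r}}_\alpha$ together with Proposition \ref{lemanova2}(a), identify the state sets through the natural class bijection, and collapse both transition relations to $(R_{\cal A}\circ\delta^A_x\circ E)(a,b)$ (and both terminal values to $(R_{\cal A}\circ\tau^A)(a)$) using Proposition \ref{lemanova2}(b) and the right-invariance identities, exactly as in the paper. If anything, your plan is slightly more complete, since it explicitly verifies the terminal-state condition $E^{\mathrm{r}}\circ\tau^{A^{\mathrm{r}}_\alpha}=\tau^{A^{\mathrm{r}}_\alpha}$, which the paper leaves implicit.
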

\begin{proof}
Let $E$ be any right invariant equivalence on
$\cal A$.~Define a crisp relation $E^{\rm r}$ on $A^{\rm
r}_{\alpha}$ by $E^{\rm r}(a,b)=E(a,b)$,
for all $a,b\in A^{\rm r}_{\alpha}$. Obviously, $E^{\rm
r}$ is a crisp equivalence  on the set $A^{\rm r}_{\alpha}$. Moreover, by (\ref{eq:reduced}) and Proposition \ref{lemanova2}, and
we have that
\[
\begin{aligned}
(E^{\rm r}\circ \delta_x^{A^{\rm r}_{\alpha}})(a,b)&=\bigvee_{c\in
A^{\rm r}_{\alpha}}E(a,c)\otimes(R_{{\cal
A}}\circ\delta_x^{A})(c,b)\le\bigvee_{c\in A_{\alpha}}E(a,c)\otimes(R_{{\cal A}}\circ\delta_x^{A})(c,b)\\
&=(E\circ R_{\cal A}\circ\delta_x^{A})(a,b)\le (R_{\cal
A}\circ\delta_x^{A}\circ E)(a,b)=^*(\delta_x^{A^{\rm
r}_{\alpha}}\circ E^{\rm r})(a,b),
\end{aligned}
\]for all $a,b\in A^{\rm r}_{\alpha}.$ Note that the equality marked by $*$ follows from the fact
that $(R_{{\cal A}_\alpha}\circ\delta^{A_\alpha}_x)(a,b)=0$, for
every $b\in A_\alpha\setminus A_{\alpha}^{\rm r},$ and every $x\in
X.$ Thus $E^{\rm r}$ is a right invariant crisp equivalence on
${\cal A}_\alpha^{\rm r}.$

Define a mapping $\Phi:A^{\rm r}_\alpha/E^{\rm r}\to (A/E)^{\rm
r}_\alpha$ by $\Phi(E^{\rm r}_a)=E_a$, for every $E^{\rm r}_a\in A^{\rm r}_\alpha/E^{\rm
r}$.~For an arbitrary $E^{\rm r}_a\in A^{\rm r}_\alpha/E^{\rm r}$,
there are $b\in A$ and $x\in X$ such that $\delta^A(b,x,a)=1$.~Since~we have
that $\delta^{A/E}(E_b,x,E_a)\ge \delta^A(b,x,a)$, we
obtain $E_a\in (A/E)^{\rm r}_\alpha.$ Moreover, from
\[
E^{\rm r}_a=E^{\rm r}_b\iff E^{\rm r}(a,b)=1\iff E(a,b)=1\iff
E_a=E_b,
\]for all $a,b\in A^{\rm r}_\alpha,$ we conclude that $\Phi$ is
both a well-defined and an injective mapping.

 Further, for any $E^a\in
(A/E)^{\rm r}_\alpha$ there are $E_b\in A/E$ and $x\in X$ such
that $\delta^{A/E}(E_b,x,E_a)=1$, which implies
\[
\delta^A(c,x,d)=1,\ E_b=E_c,\ E_d=E_a,\text{\ \ for some\ \ }c,d\in
A.
\]
Thus $d\in A^{\rm r}_\alpha$, and $\Phi(E^{\rm r}_d)=E_d=E_a.$ In
conclusion, $\Phi$ is a bijective mapping.

Finally, by (\ref{eq:dE1}), (\ref{eq:reduced}), and Proposition
\ref{lemanova2}, we have
\[
\begin{aligned}
\delta_x^{A^{\rm r}_\alpha/E^{\rm r}}(E^{\rm r}_a,E^{\rm
r}_b)&=(\delta_x^{A^{\rm r}_\alpha}\circ E^{\rm
r})(a,b)=\bigvee_{c\in A^{\rm r}_\alpha}(R_{\cal
A}\circ\delta^A_x)(a,c)\otimes E^{\rm r}(c,b)\\
&=\bigvee_{c\in A}(R_{\cal A}\circ\delta^A_x)(a,c)\otimes E^{\rm
r}(c,b)=(R_{\cal A}\circ\delta^A_x\circ E)(a,b)\\
&=(R_{\cal A}\circ E\circ\delta^A_x\circ E)(a,b)=(R_{{\cal
A}/E}\circ\delta^{A/E}_x)(E_a,E_b)=\delta_x^{(A/E)^{\rm r}_\alpha}(\Phi(E^{\rm r}_a),\Phi(E^{\rm
r}_b))
\end{aligned}
\]for all $E^{\rm r}_a,E^{\rm
r}_b\in {\cal A}^{\rm r}_\alpha/E$. Therefore, $\Phi $ is an isomorphism.
\end{proof}

Let us recall that Theorem \ref{th:3} gives us a simple method to
construct various types of  fuzzy automata from the fuzzy regular expression
$\alpha$.~This method is based on choice
 of different
nondeterministic automata~$\cal A$ recognizing $\|\alpha_R\|$,
from which we obtain different  fuzzy automata ${\cal
A}_\alpha$ recognizing $\|\alpha\|$.

Let $\cal L$ be an integral $\ell-$monoid, and let $\alpha$ be an
arbitrary  fuzzy regular expression over an alphabet $X$. For a
regular expression $\alpha_R$ over $X\cup Y$, where $Y$ is an
alphabet associated with $\alpha,$ let  ${\cal A}_{\rm
f}(\alpha_R)=(A_{\rm f}, X\cup Y, \delta^{A_{\rm f}}, 0,
\tau^{A_{\rm f}})$ be the follow automaton of $\alpha_R$. In this
paper we will assume that the the follow automaton of $\alpha$ is
exactly the factor automaton of the position automaton of $\alpha$
with respect to a particular right invariant  equivalence~$E$, called
the {\it follow equivalence\/}. For the definition of the follow
equivalence we refer to \cite{IY.02a,IY.02,IY.03a,IY.03}.
Starting from ${\cal A}_{\rm f}(\alpha_R)$, by (\ref{eq:fuzzyrec.1}), (\ref{eq:fuzzyrec.2})
and (\ref{eq:reduced}) we obtain the reduced  fuzzy automaton associated with ${\cal A}_{\rm f}(\alpha_R)$ and
$\alpha$, which is denoted by ${\cal A}^{\rm r}_{\rm
ff}(\alpha)=(A_{\rm ff}, X, \delta^{A_{\rm ff}}, 0, \tau^{A_{\rm
ff}})$. The fuzzy automaton ${\cal A}^{\rm r}_{\rm
ff}(\alpha)$ is called the {\it follow  fuzzy automaton\/}~of~$\alpha$.

\begin{theorem}\label{th:8}
Let $\cal L$ be an integral $\ell-$monoid, let $\alpha$ be an
arbitrary  fuzzy regular expression, and let ${\cal A}^{\rm r}_{\rm
pf}(\alpha)$ and ${\cal A}^{\rm r}_{\rm ff}(\alpha)$ be respectively the position  fuzzy  automaton and the follow fuzzy automaton of $\alpha$.

Then the follow  fuzzy automaton ${\cal A}^{\rm r}_{\rm ff}(\alpha)$ of $\alpha$ is isomorphic to the
factor  fuzzy automaton of the position  fuzzy automaton
${\cal A}^{\rm r}_{\rm pf}(\alpha)$ of $\alpha$ with respect to some right invariant crisp equivalence on ${\cal A}^{\rm r}_{\rm pf}$.
\end{theorem}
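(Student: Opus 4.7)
The plan is to reduce the statement directly to Theorem~\ref{th:nova1} by taking $E$ to be the follow equivalence on the position automaton of $\alpha_R$. Recall that, by the assumption stated just before the theorem, the follow automaton $\mathcal{A}_{\mathrm{f}}(\alpha_R)$ is the factor automaton $\mathcal{A}_{\mathrm{p}}(\alpha_R)/E$ of the position automaton $\mathcal{A}_{\mathrm{p}}(\alpha_R)$ with respect to a certain right invariant (crisp) equivalence $E$ on $\mathcal{A}_{\mathrm{p}}(\alpha_R)$, the follow equivalence. This is the only external ingredient I would need, and it is proved in the references \cite{IY.02a,IY.02,IY.03a,IY.03}.

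With this choice in place, the definitions unfold as follows. Starting from the nondeterministic automaton $\mathcal{A} = \mathcal{A}_{\mathrm{p}}(\alpha_R)$ which recognizes $\|\alpha_R\|$, the reduced fuzzy automaton associated with $\mathcal{A}$ and $\alpha$ is, by definition, ${\cal A}^{\rm r}_{\rm pf}(\alpha) = \mathcal{A}^{\mathrm{r}}_\alpha$. On the other hand, starting from the nondeterministic automaton $\mathcal{A}/E = \mathcal{A}_{\mathrm{f}}(\alpha_R)$, which also recognizes $\|\alpha_R\|$, the reduced fuzzy automaton associated with $\mathcal{A}/E$ and $\alpha$ is, again by definition, ${\cal A}^{\rm r}_{\rm ff}(\alpha) = (\mathcal{A}/E)^{\mathrm{r}}_\alpha$. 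So the task becomes to exhibit a right invariant crisp equivalence on $\mathcal{A}^{\mathrm{r}}_\alpha$ whose factor is isomorphic to $(\mathcal{A}/E)^{\mathrm{r}}_\alpha$.

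This is exactly what Theorem~\ref{th:nova1} delivers: since $E$ is a right invariant equivalence on $\mathcal{A}$, there is a right invariant crisp equivalence $E^{\mathrm{r}}$ on $\mathcal{A}^{\mathrm{r}}_\alpha = {\cal A}^{\rm r}_{\rm pf}(\alpha)$ such that the factor fuzzy automata $(\mathcal{A}/E)^{\mathrm{r}}_\alpha$ and $(\mathcal{A}^{\mathrm{r}}_\alpha)/E^{\mathrm{r}}$ are isomorphic. Rewriting both sides in terms of the position and follow fuzzy automata, this says ${\cal A}^{\rm r}_{\rm ff}(\alpha) \cong {\cal A}^{\rm r}_{\rm pf}(\alpha)/E^{\mathrm{r}}$, which is precisely the conclusion of the theorem.

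The only non-routine step is the verification that the follow equivalence $E$ is indeed a right invariant equivalence on the nondeterministic automaton $\mathcal{A}_{\mathrm{p}}(\alpha_R)$ in the sense of equation~(\ref{eq:RIFE}). I expect this to be the main (but still minor) obstacle, because the definitions in \cite{IY.02a,IY.02,IY.03a,IY.03} phrase the follow relation combinatorially on positions rather than through the inequality $E \circ \delta^{\mathcal{A}}_x \le \delta^{\mathcal{A}}_x \circ E$ and the equality $E \circ \tau^{\mathcal{A}} = \tau^{\mathcal{A}}$. However, this equivalence of formulations is known and can be read off directly from the characterization of the follow automaton as a quotient of the position automaton, so no further computation is needed here. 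Once this is invoked, the proof consists of citing Theorem~\ref{th:nova1} with $\mathcal{A} = \mathcal{A}_{\mathrm{p}}(\alpha_R)$.
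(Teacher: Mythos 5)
Your proposal is correct and coincides with the paper's own argument: the paper likewise defines the follow automaton as the factor of the position automaton by the (right invariant) follow equivalence $E$ and then declares Theorem~\ref{th:8} an immediate consequence of Theorem~\ref{th:nova1} applied with ${\cal A}={\cal A}_{\rm p}(\alpha_R)$. The right invariance of the follow equivalence, which you flag as the only delicate point, is exactly what the paper assumes (citing the Ilie--Yu references), so no extra verification is required beyond what you describe.
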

\begin{proof}
The proof is an immediate consequence of Theorem \ref{th:nova1}.
\end{proof}
By Theorem \ref{th:8} we obtain that the follow  fuzzy automaton is
the reduced position  fuzzy automaton with respect to some right invariant crisp
equivalence, and therefore, it may be significantly smaller.~However,
Example \ref{pr10} shows that, in the general case, follow equivalences are
not necessarily the greatest right invariant crisp equivalences on the
position  fuzzy automata. Consequently, smaller  fuzzy automata
from a given  $\alpha$ can be obtained by reducing the size of
the position  fuzzy automaton of $\alpha$ by means of the greatest right
invariant crisp equivalence.

\begin{example}\rm\label{pr11}
Let $\cal L$ be G\" odel structure. Consider $\alpha=xx^*+0.1x^*$,
the  fuzzy regular expression over the alphabet $X=\{x\}$. An
expression $\alpha_R=xx^*+\lambda x^*$, over the alphabet
$\{x,\lambda\}$, is the regular expression obtained from $\alpha$.

The position automaton ${\cal A}_{\rm p}(\alpha_R)$ is given by
the following  fuzzy transition relations
\[
\delta^{A_{\rm p}}_x=\begin{bmatrix}
                   0 & 1 & 0 & 0 & 0\\
                   0 & 0 & 1 & 0 & 0\\
                   0 & 0 & 1 & 0 & 0\\
                   0 & 0 & 0 & 0 & 1\\
                   0 & 0 & 0 & 0 & 1
                   \end{bmatrix},\qquad
\delta^{A_{\rm p}}_\lambda=\begin{bmatrix}
                   0 & 0 & 0 & 1 & 0\\
                   0 & 0 & 0 & 0 & 0\\
                   0 & 0 & 0 & 0 & 0\\
                   0 & 0 & 0 & 0 & 0\\
                   0 & 0 & 0 & 0 & 0
                   \end{bmatrix},\qquad
\tau^{A_{\rm p}}=\begin{bmatrix}
                   0 \\
                   1 \\
                   1 \\
                   1 \\
                   1
                 \end{bmatrix},
\]
and the position  fuzzy automaton ${\cal A}^{\rm r}_{\rm
pf}(\alpha)$ is given by the following  fuzzy transition relations
\[
\delta^{A^{\rm r}_{\rm pf}}_x=\begin{bmatrix}
                   0 & 1 & 0 & 0.1\\
                   0 & 0 & 1 & 0\\
                   0 & 0 & 1 & 0\\
                   0 & 0 & 0 & 1
                   \end{bmatrix},\text{\ \ and\ \ }
\tau^{A^{\rm r}_{\rm pf}}=\begin{bmatrix}
                   1 \\
                   1 \\
                   1 \\
                   1
                 \end{bmatrix}.
\]
The follow relation $E_{\rm f}$ on ${\cal A}_{\rm p}(\alpha_R)$,
and the related right invariant crisp  equivalence $E_{\rm f}^{\rm
r}$ on $A^{\rm r}_{\rm pf}(\alpha)$ are
\[
E_{\rm f}=\begin{bmatrix}
                   1 & 0 & 0 & 0 & 0\\
                   0 & 1 & 1 & 0 & 0\\
                   0 & 1 & 1 & 0 & 0\\
                   0 & 0 & 0 & 1 & 1\\
                   0 & 0 & 0 & 1 & 1
                   \end{bmatrix},
E^{\rm r}_{\rm f}=\begin{bmatrix}
                   1 & 0 & 0 & 0\\
                   0 & 1 & 1 & 0\\
                   0 & 1 & 1 & 0\\
                   0 & 0 & 0 & 1
                   \end{bmatrix},
\]and therefore follow  fuzzy automaton ${\cal A}^{\rm r}_{\rm f{\rm f}}(\alpha)$ has 3
states. However, since the greatest right invariant crisp
equivalence $E_1^{\rm{cri}}$ on ${\cal A}^{\rm r}_{\rm
pf}(\alpha)$ is given by
\[
E_1^{\rm{cri}}=\begin{bmatrix}
                   1 & 1 & 1 & 1\\
                   1 & 1 & 1 & 1\\
                   1 & 1 & 1 & 1\\
                   1 & 1 & 1 & 1
                   \end{bmatrix}.
\]
we conclude that the  fuzzy finite automaton ${\cal A}^{\rm
r}_{\rm pf}(\alpha)/E_1^{\rm cri}$ has only 1 state, and is
significantly smaller than ${\cal A}^{\rm r}_{\rm pf}(\alpha)$.
\end{example}

Let $\alpha$ be an   regular expression.~Observe that, starting
from the partial derivative automaton of the~regular expression $\alpha_R$
obtained
from $\alpha$, it is possible to construct the fuzzy partial
derivative automaton of $\alpha$. Since the partial derivative
automaton is isomorphic to the factor automaton of the position
automaton with respect to certain right invariant equivalence (cf.~\cite{CZ.01a,CZ.01b,CZ.02,IY.03a}), the result which
correspond to Theorem \ref{th:8}, concerning  fuzzy partial
derivative automata, can be easily derived.

\section{Concluding remarks}

In this paper we have discussed the problem of the effective construction of a fuzzy finite automaton from a given fuzzy regular expression.~We have approached this problem by converting a given fuzzy regular expression $\alpha $ over an alphabet $X$ in an ordinary regular expression $\alpha_R$ over a larger alphabet $X\cup Y$ obtained by adding new letters assigned to different scalars that appear in the fuzzy regular expression~$\alpha $. Starting from an arbitrary nondeterministic finite automaton $\cal A$ that recognizes the language $\|\alpha_R\|$ represented by the regular expression $\alpha_R$, we have constructed a fuzzy finite automaton ${\cal A}_\alpha $ associated with~$\cal A$ and $\alpha $, which recognizes the fuzzy language $\|\alpha\|$ represented by $\alpha $.~The starting nondeterministic finite~auto\-maton $\cal A$ can be
obtained from $\alpha_R$ using any of the well-known constructions for converting regular expressions to nondeterministic finite automata, such as Glushkov-McNaughton-Yamada's position automaton, Brzozowski's derivative automaton, Antimirov's partial derivative automaton, or Ilie-Yu's follow automaton.

The fuzzy finite automaton ${\cal A}_\alpha $ that we have constructed has the same number of states as the starting nondeterministic finite automaton $\cal A$, but we have also given the reduced version of the fuzzy automaton ${\cal A}_\alpha $ which can have strictly less number of states than $\cal A$.~Moreover, we have discussed the reduction of the number of states of the fuzzy automaton ${\cal A}_\alpha $ by means of right invariant crisp equivalences.

All the main results of the paper have been proved for fuzzy automata taking membership values in an integral lattice-ordered monoiod.

\end{document}